\newtheorem{theorem}{Theorem}[section]
\newtheorem{proposition}[theorem]{Proposition}
\newtheorem{lemma}[theorem]{Lemma}
\newtheorem{claim}{Claim}
\title{A 7/3-Approximation for Feedback Vertex Sets in Tournaments}
\author{
Matthias Mnich\thanks{Universit{\"a}t Bonn,  Bonn, Germany. Supported by ERC Starting Grant 306465 (BeyondWorstCase).}\\
  \texttt{mmnich@uni-bonn.de}
\and
Virginia Vassilevska Williams\thanks{Computer Science Department, Stanford University, USA. Supported by NSF Grants CCF-1417238, CCF-1528078 and CCF-1514339, and BSF Grant BSF:2012338.} \\
\texttt{virgi@cs.stanford.edu}
\and
L\'aszl\'o A. V\'egh\thanks{London School of Economics, UK. Supported by EPSRC First Grant EP/M02797X/1.} \\
\texttt{l.vegh@lse.ac.uk}
}
\date{}
\begin{document}

\maketitle
\thispagestyle{empty}

\begin{abstract}
  We consider the minimum-weight feedback vertex set problem in tournaments: given a tournament with non-negative vertex weights, remove a minimum-weight set of vertices that intersects all cycles.
  This problem is $\mathsf{NP}$-hard to solve exactly, and Unique Games-hard to approximate by a factor better than 2.
We present the first $7/3$ approximation algorithm for this problem, improving on the
previously best known ratio $5/2$ given by Cai et al. [FOCS 1998, SICOMP 2001].
\end{abstract}

\section{Introduction}
\label{sec:introduction}
Among the most basic concepts in graph theory is the notion of a \emph{feedback vertex set (FVS)} of a digraph: a subset of the vertices $S$ such that removing $S$ makes the digraph acyclic. The computational problem of finding a FVS of minimum size is known as the {\sc Feedback Vertex Set} problem. A fundamental problem with numerous applications (e.g. in deadlock recovery in operating systems), the {\sc Feedback Vertex Set} problem is among Karp's 21 original $\mathsf{NP}$-complete problems~\cite{Karp1972}. Karp's proof of NP-hardness also implies that the problem is APX-hard. Obtaining a constant factor polynomial-time approximation algorithm for the {\sc Feedback Vertex Set} problem seems elusive and is a major open problem. The best known approximation factor achievable in polynomial time is $O(\log n \log\log n)$~\cite{EvenEtAl1998,Seymour1995}.

The {\sc Feedback Vertex Set} problem is particularly interesting for the special case when the input graph is a {\em tournament}, i.e. an orientation of the complete graph. The problem restricted to tournaments has many interesting applications, most notably in social choice theory where it is essential to the definition of a certain type of election winners called the Banks set~\cite{banksset}.

The {\sc Feedback Vertex Set} problem remains $\mathsf{NP}$-complete and APX-hard in tournaments. Moreover, Speckenmeyer~\cite{Speckenmeyer1992} gave an approxi\-mation-ratio preserving polynomial time reduction from the {\sc Vertex Cover} problem in general undirected graphs to the {\sc Feedback Vertex Set} problem in tournaments. Consequently, the FVS problem in tournaments cannot be approximated in polynomial time within a factor better than $1.3606$ unless $\mathsf{P} = \mathsf{NP}$~\cite{DinurSafra2005}, and not within a factor better than 2 assuming the Unique Games Conjecture (UGC)~\cite{KhotRegev2008}.

On the upper bound side, the {\sc Feedback Vertex Set} problem in tournaments admits an easy $3$-approximation algorithm: while the tournament contains a directed triangle, place all the triangle vertices in the FVS and remove them from the tournament (see also Bar-Yehuda and Rawitz~\cite{BarYehudaRawitz2005} for another simple 3-approximation algorithm). 17 years ago,
Cai, Deng and Zang~\cite{CaiEtAl2001} improved the simple algorithm and gave a polynomial time algorithm with approximation guarantee $5/2$, even in the case when vertices have non-negative weights and one seeks a solution of approximate minimum weight.
%
In this paper we develop a better, $7/3$-approximation algorithm for the minimum weight {\sc Feedback Vertex Set} problem in tournaments, narrowing the gap to the UGC-based lower bound of~2 to $1/3$.

\begin{theorem}
\label{thm:main}
  There exists a polynomial time $7/3$-approximation algorithm for
  finding a minimum-weight feedback vertex set in a tournament.
\end{theorem}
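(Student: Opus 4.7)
The plan rests on the standard LP relaxation with one constraint per directed triangle:
\[
\min \sum_v w_v x_v \quad \text{s.t.} \quad \sum_{v\in T} x_v \geq 1 \text{ for every directed triangle } T, \quad x \geq 0.
\]
Because a tournament is acyclic iff it contains no directed triangle, this LP lower bounds the optimal FVS. Cai, Deng and Zang~\cite{CaiEtAl2001} showed the integrality gap is at most $5/2$, and our target is $7/3$. I would compute an optimal fractional solution $x^{*}$ and then perform \emph{iterative LP rounding}: any vertex $v$ with $x^{*}_v \geq 3/7$ is placed into the FVS (paying $w_v \leq (7/3)\, w_v x^{*}_v$, a $7/3$-charge against its LP contribution), deleted from the tournament, and the LP re-solved on what remains. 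After finitely many such rounds we may assume $x^{*}_v < 3/7$ for every surviving vertex.

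The crux of the argument is then to handle this reduced \emph{small-fractional} instance within ratio $7/3$. Here I would try to mimic the primal--dual/structural pattern of Cai--Deng--Zang but with a finer local charging. Concretely, I would look for a small local configuration — a single ``tight'' triangle $T$ with $x^{*}(T)=1$, or a short list of overlapping tight triangles sharing vertices — in which a suitably chosen set $S_{\mathrm{loc}}$ of vertices hits every cycle inside the configuration and satisfies $w(S_{\mathrm{loc}}) \leq (7/3) \sum_{v \in V_{\mathrm{loc}}} w_v x^{*}_v$. Taking $S_{\mathrm{loc}}$ into the FVS, deleting these vertices, and iterating would then give the claimed $7/3$-approximation, because each deletion phase pays for itself against a disjoint portion of the LP value. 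Structural facts about tournaments — every strongly connected tournament has a Hamiltonian cycle, and every cycle contains a triangle — are what allow such a local hitting set to extend globally from triangle cover to full FVS.

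The main obstacle is the last step in the worst case: when every surviving $x^{*}_v$ is close to $1/3$ and every triangle is tight, the naive ``pick two out of three'' rounding exactly matches the $5/2$ bound of~\cite{CaiEtAl2001} and gives no improvement. Breaking through to $7/3$ forces one either to (i) strengthen the LP with additional valid inequalities (for instance constraints coming from larger forbidden substructures, or from the laminar/nested family of strongly connected components), or (ii) exhibit a combinatorial rule that, when several tight triangles overlap, selects an \emph{asymmetric} subset of their vertices whose weight is charged against the \emph{joint} LP mass of all participating triangles rather than any single one. I expect the proof to proceed by a case analysis over a small list of ``irreducible'' local patterns, arguing that outside those patterns the iterative rounding already buys ratio $7/3$ and that inside each pattern a tailored local choice does as well.

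Routine bookkeeping — verifying that the LP can be solved in polynomial time (e.g.\ via an efficient separation oracle for triangles, which is trivial since there are only $O(n^3)$ of them), that the iterative rounding terminates after at most $n$ phases, and that the weight charges across phases are disjoint — would then complete the proof of Theorem~\ref{thm:main}.
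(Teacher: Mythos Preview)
Your iterative-rounding framework with threshold $3/7$ is the right starting move and matches the paper. However, the proposal has a genuine gap at precisely the point you flag as ``the main obstacle'': you do not give an argument for the small-fractional instance, only two directions one might try. The paper's proof is not a local primal--dual charging over overlapping tight triangles; your option~(ii) is never carried out and it is unclear that any such local rule yields $7/3$.

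What the paper actually does is your option~(i), executed in a very specific way. The LP is strengthened by adding, for every $7$-vertex subtournament in $\mathcal{T}_7$ (those with minimum FVS size~$3$), the constraint $x(Q)\ge 3$. With these constraints, once iterative rounding terminates with $x^*_v<3/7$ everywhere, the residual tournament $T'$ is automatically $\mathcal{T}_7$-free, and moreover $OPT(T')=\tfrac{1}{3}w(V')$ (every vertex is in a triangle, hence has positive LP value, hence its dual constraint is tight). The real work is then a combinatorial structural theorem about $\mathcal{T}_7$-free tournaments: starting from a minimum in-degree vertex one can build a layered decomposition in which each layer is $\mathcal{T}_5$-free and is $2$-in-dominated by the previous layer, after discarding a controlled set $S$ of weight at most that of the even layers. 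Running the Cai--Deng--Zang exact algorithm inside each (now $\mathcal{T}_5$-free) layer of the heavier parity yields a global FVS of weight at most $\tfrac{7}{9}w(V')=\tfrac{7}{3}OPT(T')$. None of this structure is visible from the triangle LP alone, and your proposal does not supply it.
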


In the process we uncover a structural theorem about tournament graphs that 
 has interesting connections to the tournament
colouring problem investigated by Berger et
al. \cite{BergerEtAl2013}. We explain these connections in Sect.~\ref{sec:hero}.

\paragraph{Overview.}

Let us first give an overview of Cai et al.'s result~\cite{CaiEtAl2001}.
Let ${\cal T}_5$ denote the set of tournaments on 5 vertices where the minimum FVS has size 2. 
Cai et al. showed that for any tournament free of subtournaments from ${\cal T}_5$, the minimum-weight FVS problem becomes
polynomial-time solvable.
They in fact show that the natural LP relaxation of the problem is integral in ${\cal T}_5$-free tournaments: the minimum
weight of a FVS equals the maximum value of a fractional directed triangle packing.

For the special case of unit weights only, their $5/2$-approximation algorithm starts by greedily choosing subtournaments
in ${\cal T}_5$, and including all 5 vertices in the FVS.
Once the remaining tournament admits no more subtournaments in ${\cal T}_5$, the optimal covering algorithm is used.
The algorithm returns an $5/2$-approximate optimal solution, since every step removing a subtournament decreases
the optimum value by at least 2, and includes 5  vertices in the FVS.
The algorithm extends to non-negative weights using the local ratio technique.

\medskip

We now give an overview of our approach. We define
the set ${\cal T}_7$ as the set of 7-vertex tournaments where the
minimum size of a FVS is 3. The algorithm comprises two stages. The
first stage uses the iterative rounding technique, and removes all
subtournaments in ${\cal T}_7$; the weight of the vertices included at this stage will be at
most $7/3$-times the decrease in the optimum weight. In the second
stage, we give a $7/3$-approximate combinatorial algorithm for the
remaining ${\cal T}_7$-free tournament.

The analogous first stage of Cai et al. obtains a worse factor
$5/2$.  In the second stage, their algorithm delivers an optimal
solution. In contrast, we only give an approximation algorithm in the
second stage, but that is
sufficient for the overall approximation guarantee. 

\medskip
We now provide some more detail of the two stages.
In the first stage we use the iterative rounding
technique.
We formulate the natural LP relaxation of the minimum-weight FVS problem in the given tournament $T$, including a
covering constraint for every directed triangle of $T$, and further we
include that every subtournament of $T$ belonging to~${\cal T}_7$ must be
covered by at least three vertices.
We consider an optimal solution of the LP relaxation. If there is a vertex of $T$ with fractional
value at least $3/7$, we include it in our FVS and remove it from~$T$. We then
resolve the LP  on the remaining tournament, and again include a
vertex with fractional value at least $3/7$, if there exists one. We iterate until there
are no more such vertices. At this point, the tournament will be ${\cal
T}_7$-free, and the fractional optimum value equals exactly one third
of the total weight of the vertices (see Lemma~\ref{lem:opt-frac}).


\medskip 

In the second stage, we develop a polynomial time combinatorial algorithm that delivers a
FVS of weight at most $7/9$ times the total weight of the vertices in
a ${\cal T}_7$-free tournament (Theorem~\ref{thm:T-7-free}).
Our algorithm implies our main theorem since an optimal  FVS in the remaining ${\cal T}_7$-free tournament is of size at least the optimum fractional value, which by the previous paragraph is exactly a third of the total weight of the nodes, which itself is at least $1/3\cdot 9/7 = 3/7$ of the size of the FVS returned.

To prove Theorem~\ref{thm:T-7-free}, we decompose the vertex set into ``layers''.
For the ${\cal T}_5$-free layers, we use Cai et al.'s algorithm as a subroutine to find an optimal solution to the FVS problem on such layers.
However, we cannot guarantee all layers to be $\mathcal T_5$-free, and thus include the ones that are not entirely in
the solution. The layering approach is inspired by Cai et al.'s
structural analysis of ${\cal T}_5$-free tournaments; nevertheless, we
use it quite differently.

\subsection{Related work}
Feedback vertex sets in tournaments are a well-studied subject.
Dom et al.~\cite{DomEtAl2010} showed how to decide existence of a feedback vertex set of size at most $k$ in time $2^k\cdot n^{O(1)}$, as well as a $O(k^2)$-sized kernel.
From an exact algorithms perspective, Gaspers and Mnich~\cite{GaspersMnich2013} showed how to compute a minimum FVS in time $O(1.674^n)$.
They further give the first polynomial-space algorithm to enumerate all minimal FVS of a given tournament with polynomial delay.
On the combinatorial side, they prove that any $n$-vertex tournament has at most $O(1.674^n)$ minimal FVS, thereby improving upon a long-standing bound by Moon~\cite{Moon1971} from 1971.

The related question of FVS
in {\em bipartite} tournaments has also been studied, i.e. orientations of the complete bipartite graph.
First, Cai, Deng and Zang~\cite{CaiEtAl2002} using a similar framework to their $5/2$-approximation algorithm \cite{CaiEtAl2001}, developed a $7/2$-approximation algorithm for FVS in bipartite tournaments. This was improved by Sasatte~\cite{Sasatte08} giving a $3$-approximation, and finally,
 by van Zuylen \cite{vanZuylen2011} who developed a polynomial time $2$-approximation algorithm. 

Iterative rounding is a standard and powerful method in approximation algorithms; we refer the reader to the book by Lau,
Ravi and Singh~\cite{LauEtAl2011}. The approach was made popular by
Jain's groundbreaking 2-approximation for survivable network design
\cite{Jain2001}, and the main application area is network design. However,
the same principle was already used earlier for various problems. In
particular, Krivelevich used implicitly iterative rounding for the undirected triangle cover
problem \cite{Krivelevich1995}; our application is similar to his argument.
Van Zuylen \cite{vanZuylen2011} used iterative rounding for FVS in bipartite tournaments.



\section{Description of the Algorithm}
\label{sec:notationandmainclaims}
Let $T=(V,A)$ be a tournament, equipped with a weight function $w:V\to \mathbb
R_{\geq 0}$. An arc between $u,v\in V$ will be denoted by $(u,v)\in A$ or $u\to v$.
The tournament~$T$ is {\em transitive} if it does not contain any
directed cycles, or equivalently, its vertices admit a topological order.
A vertex set $S\subseteq V$ is a {\em feedback vertex set} if
$T[V\setminus S]$ is transitive. For a vertex set $S\subseteq V$, let
$T-S$ denote the tournament resulting from the removal of the vertex set
$S$ from~$T$. If $S=\{v\}$ has a single element, we also use the
notation $T-v$.

The following straightforward characterization of FVS's in tournaments is well-known.
\begin{proposition}\label{prop:triangle}
  For any tournament $T$, a set $S$ is a feedback vertex set for $T$ if and only if $S$ intersects every directed triangle of $T$.
\end{proposition}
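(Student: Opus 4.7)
The plan is to prove both directions of the characterization, with the nontrivial direction relying on the tournament property to reduce arbitrary cycles to triangles.

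For the easy direction, suppose $S$ is a feedback vertex set of $T$. Then $T[V\setminus S]$ is transitive and hence contains no directed cycle at all, in particular no directed triangle. Thus every directed triangle of $T$ must contain at least one vertex of $S$.

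For the harder direction, suppose $S$ intersects every directed triangle of $T$; I want to show that $T-S$ is transitive. I will argue by contradiction: assume $T-S$ contains a directed cycle and let $C = v_1 \to v_2 \to \cdots \to v_k \to v_1$ be a \emph{shortest} such cycle in $T-S$. The goal is to show $k = 3$, because then $C$ is a directed triangle of $T$ disjoint from $S$, contradicting the hypothesis.

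Suppose for contradiction that $k \geq 4$. Since $T$ is a tournament, there is an arc between $v_1$ and $v_3$. In the case $v_1 \to v_3$, the sequence $v_1 \to v_3 \to v_4 \to \cdots \to v_k \to v_1$ is a directed cycle in $T-S$ of length $k-1$, contradicting minimality. In the case $v_3 \to v_1$, the sequence $v_1 \to v_2 \to v_3 \to v_1$ is a directed cycle in $T-S$ of length $3 < k$, again contradicting minimality. Hence $k = 3$, which as noted completes the proof. The main (and only minor) obstacle is noticing that the tournament property lets us shortcut any longer cycle into a triangle; once that chord argument is in place, the rest is immediate.
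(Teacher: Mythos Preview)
Your proof is correct and is the standard argument for this well-known fact. The paper does not actually give a proof of this proposition; it simply states it as a ``straightforward characterization'' that is ``well-known,'' so there is nothing to compare your argument against.
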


Let ${\cal T}_{5}$ denote the family of tournaments $T'$ on $5$ vertices that 
do not contain a transitive subtournament on $4$ vertices; equivalently,
every FVS of~$T'$ has size at least $2$. 
 The set ${\cal T}_5$ contains~3 tournaments,
the same ones used by Cai et al. Characterizations of many  related
classes of tournaments were given by Sanchez-Flores~\cite{SanchezFlores1998}.

Our main focus will be the set ${\cal T}_7$ defined as follows. Let ${\cal T}_7$ denote the family of tournaments on 7 vertices that do not contain a transitive subtournament on 5 vertices. This is equivalent to the property that  every FVS is of size at least 3.
We remark that $\mathcal T_7$ consists of 121 tournaments.
%
\begin{figure}[htb]
    \centering
    \begin{subfigure}{.5\textwidth}
      \centering
      \includegraphics[scale=1]{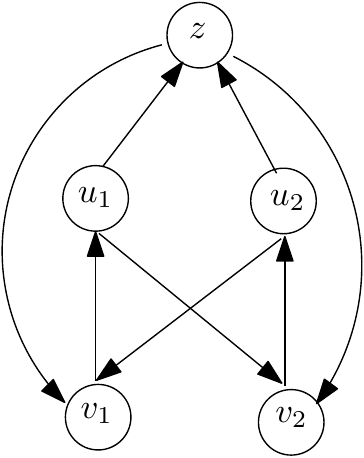}
      \caption{$S_5$}
    \end{subfigure}%
    \begin{subfigure}{.5\textwidth}
       \centering
       \includegraphics[scale=1]{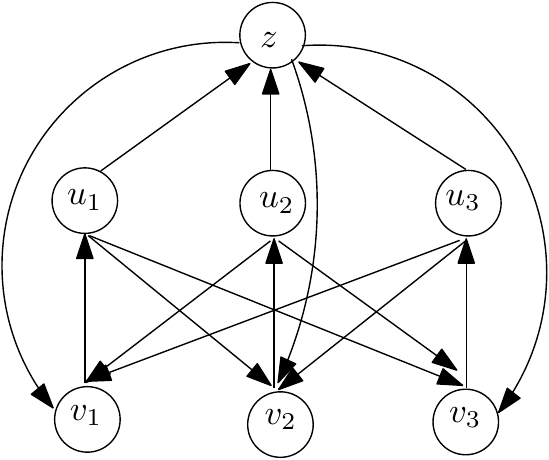}
       \caption{$S_7$}
    \end{subfigure}%
\caption{Examples of  ${\cal T}_5$ and  ${\cal T}_7$.}\label{fig:t5-t7}
  \end{figure}

Fig.~\ref{fig:t5-t7} gives important examples of tournaments  $S_5\in
{\cal T}_5$ and $S_7\in {\cal T}_7$. The arcs not included in the
figures can be oriented arbitrarily; hence both figures represent multiple
tournaments.
Tournament $S_5$ is identical to $F_1$ of Cai et al. \cite{CaiEtAl2001}. We leave the proof of the following simple claim to the reader.
\begin{proposition}
\label{cl:m3-plus-one}
For the tournaments in Fig.~\ref{fig:t5-t7}, $S_5\in {\cal T}_5$ and $S_7\in {\cal T}_7$.
\end{proposition}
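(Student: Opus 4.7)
The plan is to use Proposition~\ref{prop:triangle}, which characterises feedback vertex sets of a tournament as exactly those subsets hitting every directed triangle. Under this characterisation, $S_5\in\mathcal T_5$ amounts to saying that for every single vertex $v\in V(S_5)$ there is a directed triangle of $S_5$ avoiding $v$ (so that $\{v\}$ fails to be a FVS and hence every FVS has size $\geq 2$). Similarly, $S_7\in\mathcal T_7$ amounts to saying that for every pair $\{u,v\}\subseteq V(S_7)$ there is a directed triangle of $S_7$ avoiding both $u$ and $v$ (so that $\{u,v\}$ fails to be a FVS and every FVS has size $\geq 3$). Since Fig.~\ref{fig:t5-t7} allows the non-drawn arcs to be oriented arbitrarily, we have to exhibit such triangles using only the arcs that are already drawn in the figure; any triangle built only from drawn arcs survives in every completion of the partial orientation to a tournament.

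For $S_5$, the first step is to enumerate the directed triangles whose three arcs all appear in the drawn partial orientation of Fig.~\ref{fig:t5-t7}(a). Because only a handful of arcs are prescribed, this list is short. I would then go through each of the five vertices $v$ and point to at least one triangle in the list that avoids $v$. Any symmetry of the drawn subgraph (for instance the cyclic symmetry visible in the figure) can be used to collapse several of these five cases into a single representative check.

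For $S_7$ the strategy is the same, only the case analysis is heavier. I would first write down the directed triangles formed entirely by the drawn arcs of Fig.~\ref{fig:t5-t7}(b), and then verify for each of the $\binom{7}{2}=21$ pairs $\{u,v\}$ that at least one triangle in the list is disjoint from $\{u,v\}$. Again the symmetries of the drawing (e.g.\ a cyclic symmetry on the seven vertices) group these pairs into a small number of equivalence classes corresponding to the cyclic distance between $u$ and $v$, so one only needs to handle a few representatives by exhibiting an explicit triangle.

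The main obstacle is purely bookkeeping, particularly for $S_7$ where the list of forced triangles and pairs of vertices is larger; no conceptual ingredient beyond Proposition~\ref{prop:triangle} and the symmetries of the two constructions is needed, consistent with the paper's description of the claim as \emph{simple}.
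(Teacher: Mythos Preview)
Your approach is correct and is exactly the natural way to verify the claim; the paper itself provides no proof, explicitly leaving it to the reader, so there is nothing to compare against beyond noting that your plan (exhibiting, for each vertex of $S_5$ and each pair in $S_7$, a directed triangle among the drawn arcs that avoids it) is the intended straightforward verification.
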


For a tournament $T$, let $\Delta(T)$ denote the family of vertex sets
of directed triangles in~$T$. According to Propostion~\ref{prop:triangle}, $T$ is transitive if and
only if $\Delta(T)=\emptyset$. Similarly,  ${\cal T}_{5}(T)$ and   ${\cal T}_{7}(T)$ denote
the family of vertex sets of the subtournaments of $T$ isomorphic to a
tournament in ${\cal T}_{5}$ and~${\cal T}_{7}$, respectively.
We say that $T$ is \emph{$\mathcal T_5$-free} if
$\mathcal T_{5}(T)=\emptyset$ and \emph{$\mathcal T_7$-free} if $\mathcal T_{7}(T)=\emptyset$.

\medskip

We use iterative rounding for the following LP relaxation of the FVS problem in a
tournament $T=(V,A)$ with weight function $w:V\rightarrow \mathbb Q_{\geq 0}$:

\begin{equation}
  \begin{aligned}
    \min~& w^T x\\
    x(R)&\ge 1\quad \forall R\in \Delta(T)\\
    x(Q)&\ge 3\quad \forall Q\in {\cal T}_7(T) \\
    x&\ge 0
  \end{aligned}
\tag{LP}\label{eqn:lprelaxation}
\end{equation}

Notice that~\eqref{eqn:lprelaxation} does not impose any constraints
for subtournaments in $\mathcal T_5(T)$. This is an LP of polynomial
size. Let $OPT(T)$ denote the optimum value of (LP).

\renewcommand{\algorithmicrequire}{\textbf{Input:}}
\renewcommand{\algorithmicensure}{\textbf{Output:}}
\begin{algorithm}
  \begin{algorithmic}[1]
    \Require{A tournament $T = (V,A)$ with weight function $w:V\rightarrow\mathbb Q_{\geq 0}$.}
    \Ensure{A feedback vertex set of $T$ of weight at most $\frac 73 OPT(T)$.}
    \State Initialize $F = \emptyset$, $T'=T$.
      \State Find an optimal solution $x^*$ to \eqref{eqn:lprelaxation}.
    \While{$T'\neq\emptyset$ and there exists a vertex $v\in V(T')$ with $x^*_v \geq \frac 73$}
      \State Set $F:=F\cup\{v: x^*_v \geq \frac 73\}$ and $T':=T'\setminus \{v: x^*_v \geq \frac 73\}$.
      \State Remove every vertex from $T'$ not contained in any
      directed triangle; denote this resulting tournament also by $T'$.
      \State Solve \eqref{eqn:lprelaxation} for $T'$ to obtain an optimal
      solution $x^*$.
    \EndWhile
   \State If $T'\neq \emptyset$ then run Algorithm {\sc Layers}
   (Algorithm~\ref{alg:layers}) for $T'$, returning a FVS
   $F'$ of $T'$.\\
\Return $F\cup F'$.
 \end{algorithmic}
  \caption{{\sc Tournament FVS}}
\label{alg:wholealg}
\end{algorithm}

Our algorithm
(Algorithm~\ref{alg:wholealg}), iteratively builds a FVS $F$ of $T$, initialized empty. We
find an optimal solution $x^*$ to ~\eqref{eqn:lprelaxation}, and as long as there
exist vertices $v$ such that $x^*_v\ge \frac 37$, we include all of them in $F$ and remove them from $T$. We iterate this process, by resolving the
LP for the smaller tournament $T'$. 
By the first stage of the algorithm we mean the sequence of these
iterative rounding steps, which terminate once
 $T'$ becomes empty (in which case we are done), or every fractional
value $x^*_v$ satisfies $x^*_v < \frac 37$.

In this case, the current tournament $T'$ must be ${\cal
  T}_7$-free. Indeed, the constraint on the elements of ${\cal
  T}_7(T')$ guarantees that in every ${\cal T}_7$ subtournament
at least one element must have fractional value at least~$3/7$.
Note that this is true already after the very first iteration. 
The analogous task of removing all subtournaments from ${\cal T}_5(T')$ is done by Cai et al. \cite{CaiEtAl2001} using the local ratio technique.
As shown by Bar-Yehuda and Rawitz \cite{BarYehudaRawitz2005}, this could also be done via a primal-dual algorithm.
The local ratio and primal-dual techniques easily give a 3-approximation for the formulation with triangles only (given as \eqref{prog:primal} in the next Sect.). However, these do not seem to easily extend for our second goal with the iterative rounding, when we only have triangle constraints left, and we proceed as long as there is a vertex of fractional value at least $3/7$.

\medskip
In the second stage we apply Algorithm {\sc Layers}
(Algorithm~\ref{alg:layers}). That is the algorithm described in the
following theorem.

\begin{theorem}
\label{thm:T-7-free}
  There is an algorithm that, given any ${\cal T}_7$-free tournament $T'=(V',A')$ with weight function $w:
V'\rightarrow \mathbb Q_{\geq 0}$, in polynomial time finds a FVS $F'$ of $T'$ with weight at most $\frac 79w(V')$.
\end{theorem}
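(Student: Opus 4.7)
My plan is to produce $F'$ via an ordered partition of $V'$ into layers $V' = L_1 \cup L_2 \cup \cdots \cup L_m$ with the property that every arc between distinct layers points from the lower-indexed layer to the higher-indexed one. This guarantees that every directed cycle of $T'$ lies inside some single layer $L_i$. A natural way to build such a partition is an iterative peeling: at each step, pick $L_i$ to be a ``leading'' vertex set---for example the vertex set of a maximal transitive subtournament sitting at the top of a topological-like traversal, together with the few vertices forced to join it by the surrounding tournament structure---then recurse on $T' - L_i$. The $\mathcal{T}_7$-freeness of $T'$ gives strong control on each layer, since any seven vertices must contain a transitive subtournament on five vertices available for peeling, which should let me keep each $L_i$ of bounded size.

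For each layer there are two cases. If $T'[L_i]$ is $\mathcal{T}_5$-free I invoke Cai et al.'s exact algorithm \cite{CaiEtAl2001} to obtain an optimum FVS $F_i \subseteq L_i$, whose weight equals the LP-value of $T'[L_i]$ and is in particular at most $\tfrac{1}{3} w(L_i)$, since $x \equiv \tfrac{1}{3}$ is feasible for the triangle-covering LP. Otherwise $T'[L_i]$ contains a member of $\mathcal{T}_5$ and I set $F_i := L_i$, paying $w(L_i)$. Then $F' := \bigcup_i F_i$ is a valid FVS of $T'$ by Proposition~\ref{prop:triangle} combined with inter-layer acyclicity. Writing $G$ and $B$ for the total weights of $\mathcal{T}_5$-free (``good'') and $\mathcal{T}_5$-containing (``bad'') layers respectively, the target $w(F') \le \tfrac{7}{9} w(V')$ reduces to $\tfrac{G}{3} + B \le \tfrac{7}{9}(G+B)$, equivalently $B \le 2G$.

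The main obstacle is designing the layering so that $B \le 2G$ actually holds, which requires using the $\mathcal{T}_7$-free hypothesis non-trivially. Any bad layer $L_i$ contains a 5-vertex subtournament with minimum FVS of size $2$, and adjoining any two further vertices yields a 7-vertex subtournament which by $\mathcal{T}_7$-freeness must contain a transitive subtournament on $5$ vertices. I would exploit this to force, around each bad layer, sufficient $\mathcal{T}_5$-free structure (either absorbed into the same layer or located in adjacent layers) whose cheap Cai-et-al.\ cost amortizes the full weight paid on the bad layer. Making this quantitative --- so that every bad layer can be charged to at least half its weight in accompanying good layers --- is the technical heart of the argument, and will likely require either a case analysis over the $121$ tournaments of $\mathcal{T}_7$ or a carefully tuned greedy choice that always extracts a large transitive piece before being forced to emit a bad block. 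As a sanity check, the $\tfrac{7}{9}$ bound matches exactly the $\tfrac{7}{3}$ rounding factor applied to the LP value $\tfrac{1}{3}w(V')$ of a $\mathcal{T}_7$-free instance guaranteed by Lemma~\ref{lem:opt-frac}, so the layering analysis is effectively an LP rounding in disguise.
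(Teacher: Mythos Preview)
Your layering requirement is too strong and collapses to something trivial. If every arc between distinct layers points from the lower to the higher index, then the layer index is non-decreasing along every directed path, so any two vertices in the same strong component must lie in the same layer. Thus the finest layering of this kind is precisely the strong-component decomposition. But a $\mathcal{T}_7$-free tournament can be strongly connected and still contain a member of $\mathcal{T}_5$---already $S_5$ itself is such an example, and one can build arbitrarily large ones by iterating the paper's layer structure. In that case you get a single layer, it is ``bad'', $G=0$, and $B\le 2G$ fails outright (for $S_5$ with unit weights your output has weight $5>\tfrac{35}{9}$). The vague ``peeling a maximal transitive subtournament together with a few forced vertices'' does not escape this: whatever you peel, if the remainder sends any arc back into it, your inter-layer condition is violated. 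So the charging scheme you defer to ``the technical heart'' cannot be made to work under this layering.

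The paper's layering is genuinely different. It uses BFS-distance layers $U_1,U_2,\ldots$ from a carefully chosen root, for which the acyclicity condition holds only between \emph{non-adjacent} layers (Lemma~\ref{lem:Ui}(a)); adjacent layers may have arcs both ways. The key structural result (Theorem~\ref{thm:layers}) is that in a $\mathcal{T}_7$-free tournament the third and fourth BFS layers from any vertex are $\mathcal{T}_5$-free and $2$-in-dominated by the previous layer; by restarting the BFS every two layers from a suitably chosen vertex $z_{2k+1}$, \emph{every} layer $U_i$ is made $\mathcal{T}_5$-free. There are no bad layers at all. The waste is a set $S$ of vertices discarded during the restarts, with $w(S)\le w(L_0)$ by construction. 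Finally, to kill cycles spanning adjacent layers one keeps only the even layers or only the odd layers---whichever is heavier---applies \textsc{Cai--Deng--Zang} inside each kept layer, and puts the other parity together with $S$ into $F'$. The $\tfrac{7}{9}$ bound then follows from $w(S)\le w(L_0)$ and $\max\{w(L_0),w(L_1)\}\ge \tfrac{1}{3}w(V')$. The piece your proposal is missing is precisely this structural theorem about BFS layers and the even/odd trick that compensates for the weaker inter-layer condition.
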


We defer the description of Algorithm {\sc Layers} as well as the proof of
Theorem~\ref{thm:T-7-free} to Sect.~\ref{sec:combinatorialproof}. 
We now prove the validity of Algorithm~\ref{alg:wholealg}, provided this
result.

\section{Proof of Theorem~\ref{thm:main}}
It is straightforward to see that the set $F\cup F'$ returned by the
algorithm is a FVS of $T$.
The next simple lemma shows that in every iterative rounding step, the weight of the elements
added to $F$ can be bounded by the decrease of $OPT(T)$.

\begin{lemma}\label{lem:it-round}
In every iteration during the first stage of the algorithm with
current tournament $T'$ and set $F$, we have
\begin{equation*}
w(F)\le \frac73(OPT(T)-OPT(T')) \enspace .
\end{equation*}
\end{lemma}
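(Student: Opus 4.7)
The plan is to prove the inequality by telescoping over the successive iterative rounding steps of the first stage. Let $T = T_0, T_1, T_2, \ldots, T_k = T'$ denote the sequence of tournaments encountered, where $T_i$ is obtained from $T_{i-1}$ by removing the set $S_i := \{v \in V(T_{i-1}) : x^*_v \geq 3/7\}$ (where $x^*$ is the optimal solution to \eqref{eqn:lprelaxation} on $T_{i-1}$), followed by removing any vertex not contained in a directed triangle. Then $F = S_1 \cup \cdots \cup S_k$, and the goal is to show $\tfrac{3}{7} w(F) \leq OPT(T) - OPT(T')$, which telescopes from the one-step bound $OPT(T_i) \leq OPT(T_{i-1}) - \tfrac{3}{7} w(S_i)$.

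To prove the one-step bound, I would take the optimal solution $x^*$ for \eqref{eqn:lprelaxation} on $T_{i-1}$ and consider its restriction $\tilde x$ to $V(T_i)$. The key observation is that every directed triangle of $T_i$ is also a directed triangle of $T_{i-1}$, and every subtournament of $T_i$ isomorphic to a member of $\mathcal T_7$ is equally a subtournament of $T_{i-1}$; hence every constraint of \eqref{eqn:lprelaxation} for $T_i$ appears among the constraints for $T_{i-1}$ (on a subset of variables). Thus $\tilde x$ is feasible for \eqref{eqn:lprelaxation} on $T_i$, giving
\begin{equation*}
OPT(T_i) \;\leq\; \sum_{v \in V(T_i)} w_v x^*_v \;\leq\; \sum_{v \in V(T_{i-1}) \setminus S_i} w_v x^*_v \;=\; OPT(T_{i-1}) - \sum_{v \in S_i} w_v x^*_v.
\end{equation*}
Here the middle inequality accounts for possibly dropping the contribution of vertices removed because they belong to no directed triangle of $T_{i-1} - S_i$; since weights and $x^*$-values are nonnegative, this can only decrease the sum. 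Finally, because each $v \in S_i$ satisfies $x^*_v \geq 3/7$, we obtain $\sum_{v \in S_i} w_v x^*_v \geq \tfrac{3}{7} w(S_i)$, which yields the desired one-step bound $OPT(T_i) \leq OPT(T_{i-1}) - \tfrac{3}{7} w(S_i)$.

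Summing the one-step bound over $i = 1, \ldots, k$ telescopes to $OPT(T) - OPT(T') \geq \tfrac{3}{7} \sum_{i=1}^k w(S_i) = \tfrac{3}{7} w(F)$, which rearranges to the statement of the lemma.

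The only subtle point, and the main thing to verify carefully, is the feasibility of the restricted solution $\tilde x$ on $T_i$: this relies on the simple but important fact that passing to an induced subtournament never creates new triangles or new members of $\mathcal T_7$, so the LP constraints for $T_i$ are inherited from those of $T_{i-1}$. Everything else is an immediate consequence of the thresholding rule $x^*_v \geq 3/7$ and nonnegativity of weights.
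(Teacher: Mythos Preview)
Your proof is correct and follows essentially the same approach as the paper's: both establish the one-step inequality $OPT(T_{i})\le OPT(T_{i-1})-\tfrac{3}{7}w(S_i)$ by restricting the optimal LP solution and using the threshold $x^*_v\ge 3/7$, then combine the steps (you by telescoping, the paper by induction). Your handling of the removal of triangle-free vertices via nonnegativity matches the paper's observation that this step can only increase the right-hand side.
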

\begin{proof}
We prove the claim by induction.
 It is clearly true at the beginning when $T'=T$.
Whenever we remove a vertex not contained in any triangle, the left-hand side
remains unchanged and the right-hand side may only increase. It
is sufficient to prove that if $x^*$ is an optimal solution to 
\eqref{eqn:lprelaxation} for $T'$ and $S=\{v: x_v^*\ge
\frac37\}\neq\emptyset$,
then $OPT(T'\setminus S)+\frac 37w(S)\le OPT(T')$.

Note that $x^*$ restricted to $T'\setminus S$ is feasible to
\eqref{eqn:lprelaxation} for $T'\setminus S$, and thus
$OPT(T'\setminus S)\le OPT(T')-\sum_{v\in S}w(v)x^*_v\le OPT(T')-\frac37w(S)$, as
required. 
\end{proof}

As observed above, the tournament $T'$ at the end of the first stage
is ${\cal T}_7$-free. Theorem~\ref{thm:T-7-free} guarantees that the
FVS $F'$ of $T'$ returned by Algorithm {\sc Layers} has weight $w(F')\le \frac 79
w(T')$.
\begin{lemma}\label{lem:opt-frac}
If $\Delta(T')\neq\emptyset$ at the end of the first stage, then
$OPT(T')=\frac13 w(T')$. 
\end{lemma}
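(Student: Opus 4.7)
The plan is to bound $OPT(T')$ from above and below by $\tfrac13 w(V(T'))$. For the upper bound I would take the constant vector $\bar x\equiv\tfrac13$ and check feasibility of (LP) on $T'$: every directed triangle $R$ satisfies $\bar x(R)=1$; and since the termination condition of the first stage together with the $\mathcal T_7$ constraints in (LP) forces $T'$ to be $\mathcal T_7$-free (as already noted in Section~\ref{sec:notationandmainclaims}), the $\mathcal T_7$ constraints are vacuous. Hence $\bar x$ is feasible and $OPT(T')\le w^T\bar x=\tfrac13 w(V(T'))$.

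For the matching lower bound I would invoke LP duality. Since only the triangle-covering inequalities remain active on $T'$, the dual reads $\max\sum_{R\in\Delta(T')} y_R$ subject to $\sum_{R\ni v} y_R\le w(v)$ for all $v$ and $y\ge 0$. Summing the vertex constraints gives $3\sum_R y_R \le w(V(T'))$, so every dual feasible solution has value at most $\tfrac13 w(V(T'))$. To show this bound is attained, let $x^*$ be the primal optimum at the end of stage one and $y^*$ a dual optimum. If I can establish that $x^*_v>0$ for every $v\in V(T')$, complementary slackness forces $\sum_{R\ni v}y^*_R=w(v)$ for every $v$; summing over $v$ then yields $3\sum_R y^*_R=w(V(T'))$, and strong duality closes the lower bound.

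The crux is therefore the strict positivity $x^*_v>0$, and this is where I use the specifics of the end of stage one. After the cleanup step every remaining $v\in V(T')$ belongs to some triangle $R\in\Delta(T')$, and the loop's termination condition guarantees $x^*_u<\tfrac37$ for all $u\in V(T')$. If one had $x^*_v=0$, then the constraint $x^*(R)\ge 1$ would force the two other vertices of $R$ to share primal mass at least $1$, so one of them would carry value at least $\tfrac12>\tfrac37$, contradicting the termination condition. The hypothesis $\Delta(T')\neq\emptyset$ ensures that this argument is non-vacuous; beyond this short combinatorial observation, the rest of the proof is routine LP-duality bookkeeping, so I do not expect further obstacles.
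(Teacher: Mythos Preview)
Your proof is correct and follows essentially the same route as the paper: both establish $x^*_v>0$ for every $v$ via the triangle argument (any $x^*_v=0$ would force a neighbour to carry mass $\ge\tfrac12>\tfrac37$), then use complementary slackness to make every dual constraint tight and sum to obtain $3\sum_R y^*_R=w(V(T'))$, whence $OPT(T')=\tfrac13 w(V(T'))$ by strong duality. Your separate upper bound via $\bar x\equiv\tfrac13$ is correct but redundant, since the complementary-slackness computation already yields the equality directly.
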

Before proving this lemma, let us see how it concludes the proof of Theorem~\ref{thm:main}.
According to Theorem~\ref{thm:T-7-free} and Lemma~\ref{lem:opt-frac}, $w(F') \leq \frac 79w(T') \leq \frac 73 OPT(T')$.
Using Lemma~\ref{lem:it-round}, we see that
the weight of the constructed FVS $F\cup F'$ is $w(F\cup F')\le \frac73 OPT(T)$.

The proof of Lemma~\ref{lem:opt-frac} analyzes the LP relaxation with triangle constraints only.
At the end of the first stage, $T'$ is $\mathcal{T}_7$-free. Hence the second set of constraints in
\eqref{eqn:lprelaxation} for $T'$ is empty. Let us omit these
constraints and write \eqref{eqn:lprelaxation}  together with its
dual:

\begin{multicols}{2}
  \noindent
  \begin{equation}\tag{P}\label{prog:primal}%
    \begin{aligned}%
    \min\ \ & w^T x\\
    x(R)&\ge 1\quad \forall R\in \Delta(T')\\
    x:~& V\to \mathbb R_+
    \end{aligned}
  \end{equation}
  \begin{equation}\tag{D}\label{prog:dual}%
    \begin{aligned}%
    \max \mathbf{1}^T y& \\
    \sum_{R: v\in R} y(R) &\le w_v \quad \forall v\in V'\\
    y:~& \Delta(T')\to \mathbb R_+
    \end{aligned}
  \end{equation}
\end{multicols}

\begin{proof}[Proof of Lemma~\ref{lem:opt-frac}]

We assumed that $\Delta(T')\neq \emptyset$, and that $x_v^*\le\frac 37$ for
every $v\in V'$, where $V'=V(T')$ is the vertex set of $T'$.
\begin{claim}
$x_v^*>0$ for every $v\in V'$.
\end{claim}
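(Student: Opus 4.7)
The plan is to combine two easy invariants of the first stage of Algorithm \ref{alg:wholealg}. First, the termination condition of the \textbf{while} loop tells us that every remaining vertex has small fractional value. Since the loop exited with $T'\neq\emptyset$, we have $x_v^*<\tfrac37$ for all $v\in V'$. Second, Step 5 of each iteration deletes every vertex that is not contained in a directed triangle; in particular, this deletion step occurs on the final pass before the LP is resolved and the loop exits. Hence every surviving vertex $v\in V'$ lies in at least one directed triangle $R\in\Delta(T')$.

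Armed with these two facts, I would argue by contradiction. Suppose $x_v^*=0$ for some $v\in V'$. Pick any triangle $R=\{v,a,b\}\in\Delta(T')$ containing $v$; such an $R$ exists by the second observation. The covering constraint for $R$ in \eqref{prog:primal} gives
\begin{equation*}
x_v^*+x_a^*+x_b^*\ge 1,
\end{equation*}
so $x_a^*+x_b^*\ge 1$ and therefore $\max\{x_a^*,x_b^*\}\ge\tfrac12>\tfrac37$. This contradicts the first observation that every vertex of $T'$ has fractional value strictly less than $\tfrac37$.

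This argument is the entire claim, and no step looks like a real obstacle: the heart of the matter is just ensuring that both invariants genuinely hold when the loop exits. The only subtlety is bookkeeping, namely checking that Step 5 is executed \emph{after} every rounding step (so that the triangle-containment property is valid for the very LP whose optimum $x^*$ is under discussion), and checking that the exit condition is $x_v^*<\tfrac37$ (strict) rather than $\le\tfrac37$, which is exactly how the \textbf{while} condition is written. Both are immediate from the pseudocode.
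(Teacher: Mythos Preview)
Your proof is correct and is essentially the same argument as the paper's: assume $x_v^*=0$, use that $v$ lies in some triangle $\{v,a,b\}\in\Delta(T')$, and derive $\max\{x_a^*,x_b^*\}\ge\tfrac12>\tfrac37$, contradicting the loop's exit condition. Your explicit bookkeeping on why every vertex of $T'$ lies in a triangle (Step~5) and why the strict bound $x_v^*<\tfrac37$ holds is a nice clarification, but the core argument matches the paper's exactly.
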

\begin{proof}
For a contradiction, assume $x_v^*=0$ for some $v\in V'$. Every vertex in
$T'$ is contained in a directed triangle; say $\{v,u,z\}\in
\Delta(T')$. The relaxation \eqref{eqn:lprelaxation} includes a constraint
$x^*_v+x^*_u+x^*_z\ge 1$, and therefore $x^*_u\ge \frac12$ or
$x^*_z\ge \frac 12$, a contradiction to $x^*_v \leq \frac{3}{7}$ for all $v\in V'$.
\end{proof}

  By primal-dual slackness, we must have $\sum_{u\in R} y(R)=w(u)$ for all $u\in V'$.
  Then
  \begin{equation*}
    w(V')=\sum_{u\in V'}\sum_{R:u\in R} y(R)=\sum_{R\in\Delta(T')}y(R)\sum_{u\in R}1=3 \sum_{R\in\Delta(T')}y(R)=3\cdot OPT(T'),
  \end{equation*}
completing the proof.
In the third equation, we used that every triangle contains exactly three vertices.
\end{proof}

\section{The Algorithm {\sc Layers}}
\label{sec:combinatorialproof}
In this section, we present Algorithm {\sc Layers} and prove Theorem~\ref{thm:T-7-free}.
First, we need the following result by Cai et al. \cite[Sect. 4]{CaiEtAl2001}.
\begin{theorem}[\cite{CaiEtAl2001}]\label{thm:chinese}
There exists an algorithm that, given any ${\cal T}_5$-free tournament $\hat{T}$ with non-negative vertex weights, finds in polynomial-time a minimum weight FVS in $\hat{T}$.
\end{theorem}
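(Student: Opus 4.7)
The plan is to prove Theorem \ref{thm:chinese} by showing that the natural triangle-covering LP relaxation is integral over $\mathcal{T}_5$-free tournaments, and then solving this polynomial-sized LP. Consider
\begin{equation*}
  \min\{w^T x : x(R) \geq 1 \ \forall R \in \Delta(\hat{T}),\ x \geq 0\}.
\end{equation*}
By Proposition \ref{prop:triangle}, integer feasible solutions are exactly indicators of FVS's, so it suffices to show that every extreme optimal solution is $\{0,1\}$-valued whenever $\hat{T}$ is $\mathcal{T}_5$-free; since the LP has $n$ variables and $O(n^3)$ constraints, any polynomial LP method then returns the minimum-weight FVS. Equivalently, by LP duality the goal is the min-max identity that the minimum-weight FVS equals the maximum fractional triangle packing $\max\{\mathbf{1}^T y : \sum_{R \ni v} y(R) \leq w_v \ \forall v,\ y \geq 0\}$.

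A natural constructive route is a primal-dual (or local-ratio) algorithm. Maintain a partial FVS $F$ and a fractional triangle packing $y$, initially empty and zero. At each step, select a directed triangle $R$ in the residual tournament $\hat{T}-F$ and raise $y(R)$ until some vertex $v$ becomes tight, i.e., $\sum_{R'\ni v} y(R') = w_v$; add the newly tight vertices to $F$ and recurse on $\hat{T}-F$. Since $\mathcal{T}_5$-freeness is preserved under vertex removal, the recursive subproblems remain $\mathcal{T}_5$-free, so the invariant is maintained throughout; the algorithm terminates in $O(n)$ iterations with polynomial work per iteration. Optimality reduces to a matching bound: by construction $w(F) = \sum_{v \in F}\sum_{R \ni v} y(R)$, and one must argue that the careful triangle selection ensures this equals $\mathbf{1}^T y$, which by weak LP duality is at most the minimum FVS weight.

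The main obstacle is a structural lemma guaranteeing that a \emph{good} triangle $R$ can always be selected so that no vertex in $F$ is over-counted — equivalently, a certificate that the triangle hypergraph of a $\mathcal{T}_5$-free tournament is sufficiently restricted to force LP integrality. I would attack this by a structural analysis of $\mathcal{T}_5$-free tournaments: either identify a canonical decomposition (for instance, into transitive intervals whose crossings are controlled) whose interplay with the triangle family forces balancedness of the triangle-vertex incidence matrix, or argue by contradiction that any fractional extreme point with support $S=\{v:0<x^*_v<1\}$ must embed a 5-vertex configuration from $\mathcal{T}_5$ inside $S$ (possibly together with a few coordinates fixed to $1$). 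Since $|\mathcal{T}_5|=3$, the case analysis is finite, but extracting a forbidden $5$-vertex subtournament from an arbitrary fractional extreme point is the delicate combinatorial heart of the argument, precisely the content developed by Cai et al.

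Once the structural lemma is in hand, both LP integrality and correctness of the primal-dual scheme follow by standard arguments, completing the polynomial-time algorithm for minimum-weight FVS on $\mathcal{T}_5$-free tournaments.
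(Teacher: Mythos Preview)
This theorem is not proved in the present paper; it is quoted from Cai, Deng, and Zang~\cite{CaiEtAl2001} and used as a black box. So there is no in-paper proof to compare against, only the hints the paper gives about Cai et al.'s method (the min-max equality of Proposition~\ref{thm:t5free-fractionalpackingcoveringequality}, and the remark in Sect.~\ref{sec:combinatorialproof} that in a $\mathcal{T}_5$-free tournament the layers $V_i(z)$ from a minimum in-degree vertex $z$ are transitive and $1$-in-dominated by the previous layer).

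Your proposal correctly identifies the framework: LP integrality of the triangle-covering relaxation (equivalently, the fractional packing/covering min-max) is the right target, and a primal-dual or local-ratio scheme is a natural algorithmic wrapper. However, what you have written is a plan, not a proof. The entire content of the theorem lies in the structural lemma you explicitly leave open, and neither of the two attack lines you sketch is carried out. In particular, the primal-dual bound $w(F)=\sum_{v\in F}\sum_{R\ni v}y(R)=\mathbf{1}^T y$ requires that each raised triangle contributes exactly one tight vertex to $F$ in the final accounting, and you give no mechanism (choice rule for $R$, or reverse-delete step) that guarantees this in a $\mathcal{T}_5$-free tournament; without that, the scheme is only the generic $3$-approximation. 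From the paper's description, Cai et al.\ proceed via your first suggested route---a concrete layered decomposition into transitive pieces with controlled crossings---rather than an extreme-point contradiction argument; but either way, that structural work is the theorem, and it is absent here.
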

We shall refer to the algorithm as the \textsc{Cai-Deng-Zang} algorithm.
We also need a property of $\mathcal T_5$-free tournaments established by Cai et al.~\cite[Thm. 3.2]{CaiEtAl2001}.
\begin{proposition}[\cite{CaiEtAl2001}]
\label{thm:t5free-fractionalpackingcoveringequality}
  For any $\mathcal T_5$-free tournament $\hat{T}$ with non-negative vertex weights, the minimium weight of a FVS equals the maximum value of a fractional triangle packing.
\end{proposition}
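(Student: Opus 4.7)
The plan is to prove the equality via LP duality combined with integrality of the covering LP in the $\mathcal T_5$-free case. First I would write down the primal--dual LP pair exactly as in \eqref{prog:primal}--\eqref{prog:dual}, but applied to $\hat T$: the primal is the fractional triangle covering LP and the dual is precisely the fractional triangle packing LP. By Proposition~\ref{prop:triangle}, the $0/1$ feasible solutions of the primal are exactly the indicator vectors of feedback vertex sets, so the minimum FVS weight equals the integer optimum of the primal. Strong LP duality then gives
\[
\text{(max fractional triangle packing)} \;=\; \mathrm{OPT(LP)} \;\le\; \text{(min weight FVS)},
\]
which is the easy direction. The reverse inequality is equivalent to the statement that the primal LP attains its optimum at an integer point whenever $\hat T$ is $\mathcal T_5$-free.

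To establish this integrality I would proceed by induction on $|V(\hat T)|$ with an optimal extreme point $x^\ast$. If $x^\ast_v = 1$ for some vertex $v$, add $v$ to the FVS and apply induction to $\hat T - v$, which remains $\mathcal T_5$-free and on which $x^\ast$ restricts to an optimal solution of the corresponding LP. If $x^\ast_v = 0$ for some $v$, delete $v$ and again invoke induction. It therefore suffices to rule out the purely fractional case $0 < x^\ast_v < 1$ for every $v$. As an extreme point, $x^\ast$ is determined by $|V(\hat T)|$ linearly independent tight triangle constraints, which forces a very rigid combinatorial configuration: many pairwise-overlapping directed triangles jointly covering each vertex with total fractional mass exactly~$1$.

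The main obstacle is showing that such a tight triangle system cannot occur inside a $\mathcal T_5$-free tournament. The route I would follow mirrors Cai--Deng--Zang: first decompose $\hat T$ along its strongly connected components (every directed triangle lies in a single one) to reduce to the strongly connected case, then exploit the rigidity of the tight triangle system to extract a $5$-vertex induced subtournament whose minimum FVS has size $2$. Such a subtournament lies in $\mathcal T_5$ by definition, contradicting the hypothesis. This structural step is the genuinely hard part; the LP duality and the inductive scaffolding around it are essentially routine. As an appealing shortcut one can simply invoke the already stated Theorem~\ref{thm:chinese}: its polynomial-time procedure produces a minimum-weight FVS $F^\ast$ together with a feasible dual packing $y^\ast$ satisfying $w(F^\ast) = \mathbf{1}^{\top} y^\ast$, and this equality, sandwiched between the two weak-duality bounds above, immediately yields the proposition.
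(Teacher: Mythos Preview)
The paper does not give its own proof of this proposition: it is quoted verbatim as \cite[Thm.~3.2]{CaiEtAl2001} and used as a black box, just like Theorem~\ref{thm:chinese}. So there is no in-paper argument to compare against; your outline is being measured against a citation.

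Your LP-duality scaffolding and the reduction to ``no extreme point of \eqref{prog:primal} is purely fractional'' are the standard and correct setup, and indeed this is how Cai--Deng--Zang proceed. Two remarks, however. First, the $x^\ast_v=0$ branch of your induction is not correct as written: deleting $v$ drops every triangle constraint through $v$, so the restriction of $x^\ast$ need not be optimal (or extreme) for the LP on $\hat T-v$, and an integral optimum for $\hat T-v$ need not be a feedback vertex set of $\hat T$. The usual fix is to argue directly about the extreme point (tight constraints force every coordinate to be fractional once one is), not to recurse on a smaller tournament. Second, your ``shortcut'' via Theorem~\ref{thm:chinese} overshoots what that theorem, as stated here, actually provides: it only promises a minimum-weight FVS, not an accompanying dual packing $y^\ast$ with $w(F^\ast)=\mathbf 1^\top y^\ast$. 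In the original source the algorithm and the min--max theorem are developed jointly, so invoking one to derive the other is circular unless you import the full argument.

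Either way, the substantive step---showing that a system of $|V(\hat T)|$ tight triangle equalities on a purely fractional point forces a $\mathcal T_5$ subtournament---is precisely what you hand off to Cai--Deng--Zang, which is exactly what the present paper does by citing the result.
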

The next simple lemma bounds the cost of the FVS found by the \textsc{Cai-Deng-Zang} algorithm  in terms of the total weight of the vertices $w(\hat V)$. 
\begin{lemma}\label{cl:chinese-bound}
Let $\hat T=(\hat V,\hat A)$ be a ${\cal T}_5$-free tournament with weight function $w:\hat{V}\rightarrow\mathbb Q_{\geq 0}$, and
let~$\hat F$ be an FVS of $\hat{T}$ returned by the \textsc{Cai-Deng-Zang} algorithm applied to $(\hat{T},w)$. Then
$w(\hat F)\le w(\hat V)/3$.
\end{lemma}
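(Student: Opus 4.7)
The plan is to combine Proposition~\ref{thm:t5free-fractionalpackingcoveringequality} (min weight FVS equals max fractional triangle packing on $\mathcal T_5$-free tournaments) with a one-line double counting argument. The Cai–Deng–Zang algorithm returns a minimum weight FVS by Theorem~\ref{thm:chinese}, so it suffices to show that the maximum fractional triangle packing value is at most $w(\hat V)/3$; this immediately bounds $w(\hat F)$.

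Concretely, I would proceed as follows. First, let $y:\Delta(\hat T)\to\mathbb R_{\geq 0}$ be any feasible fractional triangle packing, i.e.\ $\sum_{R\ni v} y(R)\le w(v)$ for every $v\in \hat V$. Summing this constraint over all $v\in \hat V$ and swapping the order of summation yields
\begin{equation*}
\sum_{R\in\Delta(\hat T)} y(R)\,|R| \;=\; \sum_{v\in\hat V}\sum_{R\ni v}y(R)\;\le\;\sum_{v\in\hat V}w(v)\;=\;w(\hat V).
\end{equation*}
Since every directed triangle contains exactly three vertices, $|R|=3$, and therefore $\sum_R y(R)\le w(\hat V)/3$. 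Taking the maximum over all feasible $y$, the maximum fractional triangle packing value is at most $w(\hat V)/3$.

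Now I invoke Proposition~\ref{thm:t5free-fractionalpackingcoveringequality}: because $\hat T$ is $\mathcal T_5$-free, the minimum weight of a FVS equals this maximum fractional packing value, hence is at most $w(\hat V)/3$. By Theorem~\ref{thm:chinese}, $\hat F$ is a minimum weight FVS of $\hat T$, so $w(\hat F)\le w(\hat V)/3$, as claimed.

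There is essentially no obstacle here: the argument is the standard LP weak duality between the fractional triangle cover LP and the fractional packing LP, restricted to triangles (any feasible integer FVS has weight at least the fractional packing value by the same double-counting). The only nontrivial ingredient is Proposition~\ref{thm:t5free-fractionalpackingcoveringequality}, which has already been cited from Cai et al., so the lemma follows in a few lines.
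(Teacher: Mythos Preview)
Your proof is correct and is essentially the same argument as the paper's: both use Proposition~\ref{thm:t5free-fractionalpackingcoveringequality} to equate $w(\hat F)$ with the optimal fractional triangle packing value, then bound the latter by $w(\hat V)/3$ via the identical double-counting computation over the dual constraints. The only cosmetic difference is that the paper phrases the equality through LP duality of \eqref{prog:primal} and \eqref{prog:dual}, while you invoke Theorem~\ref{thm:chinese} and Proposition~\ref{thm:t5free-fractionalpackingcoveringequality} directly.
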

\begin{proof}
Consider \eqref{prog:primal} and \eqref{prog:dual} from the previous subsection.
  By Proposition~\ref{thm:t5free-fractionalpackingcoveringequality}, the polyhedron~\eqref{prog:primal} applied to $T' = \hat{T}$ and $w$ is integral. Consider an
optimal solution $y$ to \eqref{prog:dual}. Then
  \begin{equation*}
w(\hat F)=\mathbf{1}^T y=   \frac 13 \sum_{u\in \hat V}\sum_{R:u\in R}
y(R)\le \frac 13 \sum_{u\in \hat V}w(u)=w(\hat V)/3 \enspace . \qedhere
   \end{equation*}
\end{proof}

\subsection{Layers from a vertex}
Recall that Theorem~\ref{thm:T-7-free} takes as input a $\mathcal T_7$-free tournament $T' = (V',A')$ with weight function $w:V'\rightarrow\mathbb Q_{\geq 0}$.
For a set $S\subseteq V'$, let $N(S)=\{v\notin S~|~\exists u\in S, u\to
v\}$ denote the set of its in-neighbours; let $N(u):=N(\{u\})=\{v~|~
v\to u\}$.

For any vertex $z\in V'$ and $\ell\in\{1,\hdots,n\}$, let us define $V_\ell(z)$ as the set of vertices~$v$ such that the shortest directed path from $v$ to $z$ has length exactly $\ell-1$.
Equivalently, let $V_1(z)=\{z\}$, $V_2(z)=N(z)$, and for $\ell\geq 2$ let
\begin{equation*}
  V_{\ell+1}(z):=\{v\in V'\setminus (V_1(z)\cup\ldots\cup V_\ell(z))~|~ \exists u\in V_\ell(z), v\to u \} \enspace .
\end{equation*}

We will prove the following structural result.
For two disjoint sets $S,Z\subseteq V'$, let us say that $Z$ {\em in-dominates} $S$ if 
for every $s\in S$ there exists a $z\in Z$ with $s\to z$. We say that $Z$ \emph{$2$-in-dominates}~$S$ if
$Z$ has a subset $Z'\subseteq Z$ with $|Z'|\le 2$ such that $Z'$ in-dominates $S$. 
\begin{theorem}\label{thm:layers}
For every vertex $z$, the following hold:
\begin{enumerate}[(a)]
\item The set $V_3(z)$ is  ${\cal T}_5$-free, and is 2-in-dominated by $V_2(z)$.
\item The set $V_4(z)$ is  ${\cal T}_5$-free, and is 2-in-dominated by $V_3(z)$.
\item If $z$ is a minimum in-degree vertex in the tournament, then
  $V_2(z)$ is also   ${\cal T}_5$-free. 
\end{enumerate}
\end{theorem}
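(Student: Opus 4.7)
The strategy for each of the three assertions is proof by contradiction: assume the relevant conclusion fails, and exhibit a 7-vertex subtournament of $T'$ whose minimum feedback vertex set has size at least~$3$ (equivalently, one containing no transitive subtournament on~$5$ vertices). Such a 7-vertex subtournament lies in $\mathcal T_7$, contradicting that $T'$ is $\mathcal T_7$-free.

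For the $\mathcal T_5$-free parts of~(a),~(b) and~(c), suppose $Q$ is a $\mathcal T_5$-subtournament of the layer in question. Every $q \in Q$ is reached from a designated ``source'' ($z$ for parts~(a) and~(c); a suitable vertex of $V_3(z)$ for part~(b)) and, by the layering, sends at least one arc back to the preceding layer. The plan is to analyse 7-vertex sets of the form $\{a,b\}\cup Q$ for carefully chosen helpers $a,b$ in the preceding layer (or combined with the source). By $\mathcal T_7$-freeness, each such 7-set has a transitive 5-subtournament; since $Q$ contains no transitive $4$-subset, that transitive 5 must include both helpers $a,b$ together with exactly three vertices $Q''\subseteq Q$. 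A case analysis of which interleavings of $\{a,b\}$ with $Q''$ are consistent with the known arcs (in particular $u\to z$ and $z\to q$ when $a$ or $b$ is the source) yields strong constraints: in the cleanest case these collapse to $|B(a)\cup B(b)|\le 2$ where $B(u)=\{q\in Q:q\to u\}$, combined with $Q\setminus(B(a)\cup B(b))$ being transitive. Since each $q\in Q$ has some out-neighbour in the preceding layer, $\bigcup_u B(u)=Q$ has size~$5$, which forces the $B(u)$'s to be disjoint singletons covering $Q$; the transitive-triangle condition applied to every pair then makes \emph{every} $3$-subset of $Q$ transitive, so $Q$ itself is transitive on~$5$, contradicting $Q\in\mathcal T_5$.

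For the 2-in-domination assertions in~(a) and~(b), suppose no two vertices of the preceding layer together in-dominate the layer under consideration. I would pick $u$ in the preceding layer maximising $|M_u|$, the number of its in-neighbours in the current layer, and consider the residual $R$ of vertices not in-dominated by $u$. The failure assumption means that for every other vertex $u'$ in the preceding layer, some $v\in R$ satisfies $u\to v$ and $u'\to v$. A pigeonhole on such ``doubly dominated'' vertices produces a $5$-subset of $R$ which, together with two helper vertices chosen via the maximality of $|M_u|$, is shown to form a $\mathcal T_7$ subtournament by a case analysis parallel to (and slightly more delicate than) the $\mathcal T_5$-free case.

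The main obstacle is part~(b): unlike in~(a), the layer $V_4(z)$ is at distance~$3$ from $z$, so $z$ itself cannot directly play the helper role used for $V_3(z)$. Instead I would exploit the 2-in-domination of $V_3(z)$ by $V_2(z)$ established in~(a) to select a vertex $u^*\in V_3(z)$ with many out-neighbours in $V_4(z)$, and let $u^*$ play the role that $z$ played in part~(a). For part~(c), the minimum in-degree hypothesis $|N(w)|\ge |N(z)|$ for every $w\in V_2(z)$ supplies the needed quantitative ingredient: summing in-degrees over $V_2(z)$ forces enough arcs from $V_3(z)$ into $V_2(z)$ to guarantee the helper vertices in $V_3(z)$ required to repeat the (a)-style construction with $V_2(z)$ playing the role of $V_3(z)$ and a suitable vertex of $V_3(z)$ replacing $z$.
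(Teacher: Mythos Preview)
Your high-level strategy---contradiction by exhibiting a $\mathcal T_7$ subtournament---is the right one, but the logical order and the specific constructions differ from the paper's, and your sketch has a genuine gap in part~(b).

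The paper proves the 2-in-domination assertions \emph{first}, and then derives $\mathcal T_5$-freeness as a short corollary: if $V_i(z)$ is 2-in-dominated by $\{a,b\}\subseteq V_{i-1}(z)$ and $X\subseteq V_i(z)$ is a $\mathcal T_5$, then one of $a,b$ (say $a$) has at least three out-neighbours in $X$; picking any $s\in V_{i-2}(z)$ with $a\to s$, the set $X\cup\{a,s\}$ lies in $\mathcal T_7$, since any transitive 5-subset would have to contain both $a$ and $s$ together with three vertices of $X$, and then $a\,t\,s$ is a directed triangle for some $t\in X$. This two-line argument replaces your entire $B(u)$ analysis. Your version might be salvageable, but the step ``forces the $B(u)$'s to be disjoint singletons covering $Q$'' is not justified as written and is in any case far more work than needed.

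For 2-in-domination itself, the paper's mechanism is cleaner than your maximisation-plus-pigeonhole outline. One takes a minimal (for inclusion) $H\subseteq V_{i-1}(z)$ that in-dominates $V_i(z)$. If $|H|\ge 3$, minimality yields $u_1,u_2,u_3\in H$ and $v_1,v_2,v_3\in V_i(z)$ with $v_j\to u_j$ and $u_j\to v_k$ for $j\neq k$; for $i=3$, adding $z$ gives exactly the $S_7$ configuration of Fig.~\ref{fig:t5-t7}(b). No pigeonhole on a 5-subset of a residual set is needed.

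The real gap is part~(b). Your plan to pick $u^*\in V_3(z)$ and ``let $u^*$ play the role that $z$ played in part~(a)'' does not work. For $z$'s role one needs every vertex of the next layer to be an in-neighbour of the source, but there is no single $u^*\in V_3(z)$ with $V_4(z)\subseteq N(u^*)$; that would be 1-in-domination, stronger than the very statement you are trying to prove. (Your phrase ``many out-neighbours in $V_4(z)$'' also points the arcs the wrong way, and it is unclear how 2-in-domination of $V_3(z)$ by $V_2(z)$ would help you locate such a $u^*$ at all.) The paper instead keeps the 2-in-dominating pair $\{a,b\}\subseteq V_2(z)$ obtained in part~(a): assuming the minimal in-dominating set in $V_3(z)$ for $V_4(z)$ has size $\ge 3$, one obtains $u_1,u_2,u_3\in V_3(z)$ and $v_1,v_2,v_3\in V_4(z)$ in the usual pattern, and then performs a careful case analysis on 7-vertex subsets of $\{z,a,b,u_1,u_2,u_3,v_1,v_2,v_3\}$---specifically $\{z,a,b,u_2,u_3\}\cup\{u_1,v_i\}$ for $i=1,2$---to force enough arc orientations for a contradiction. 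This case analysis is the technical heart of the proof and is not captured by your reduction.

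For part~(c), your idea is close in spirit. The paper uses the minimum in-degree hypothesis only to show that every $u\in V_2(z)$ has some $v\in V_3(z)$ with $v\to u$ (otherwise $N(u)\subsetneq N(z)$). From there it shows, via the same $S_7$ argument with $z$ as the seventh vertex, that a 2-element subset of $V_3(z)$ suffices, and then $\mathcal T_5$-freeness of $V_2(z)$ follows by the short lemma above with helpers $v\in V_3(z)$ and $z$.
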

The proof of Theorem~\ref{thm:layers} is given in Sect.~\ref{sec:layer-proof}.
Let us now provide some context and motivation.
Cai et al. \cite{CaiEtAl2001} showed that for any ${\cal T}_5$-free tournament, if we select a minimum in-degree vertex $z$, then every layer $V_i(z)$ induces a transitive tournament and is 1-in-dominated by~$V_{i-1}(z)$. This is an important step in their algorithm for finding
the exact optimal solution in  ${\cal T}_5$-free tournaments.

Assume that the analogous property held for ${\cal T}_7$-free
tournaments $T'$: starting from a minimum in-degree vertex $z$, every
layer $V_{i-1}(z)$ is ${\cal T}_5$-free. Then one could get
a FVS of $T'$ with weight at most $\frac 23w(V')$ as follows. Compare the total
weight of the even and odd layers, and include in the FVS whichever of
the two is smaller. Let us assume the total weight of the odd
layers is smaller; the argument is same for the other case. For every remaining even layer $V_{i}(z)$, run  the
\textsc{Cai-Deng-Zang} algorithm to obtain a FVS $F_{i}$ of $V_i(z)$. Form
the final FVS $F'$ of $T'$ as the union of all odd layers and the union of the $F_i$'s for the
even layers. Using
Proposition~\ref{cl:chinese-bound},  it is easy to verify $w(F')\le \frac 23 w(V')$. 
Further, $F'$ will be a FVS of $T'$, since by the construction of the $V_i(z)$'s,
every triangle must fall on consecutive layers.

However, Theorem~\ref{thm:layers} only claims ${\cal T}_5$-freeness
of layers $V_i(z)$ for $i\le 4$. 
This property might not hold for higher values of $i$.
To overcome this difficulty, we modify the layering procedure. While the layers
are constructed, we already include certain vertices in the final
FVS. This is to make sure that for every layer $U_i$, it holds that
$U_i=V_j(z')$ in some subtournament of $T$, for a certain vertex~$z'$ in a previous layer and $j=3$
or $j=4$. Hence Theorem~\ref{thm:layers} guarantees that all the
constructed layers are ${\cal T}_5$-free. The construction of the
final FVS
will be a modification of the simple argument above.

\subsection{Description of the layering algorithm}

\renewcommand{\algorithmicrequire}{\textbf{Input:}}
\renewcommand{\algorithmicensure}{\textbf{Output:}}
\begin{algorithm}[htb]
  \begin{algorithmic}[1]
    \Require{A ${\cal T}_7$-free tournament $T' = (V',A')$ with weight function $w:V'\rightarrow\mathbb Q_{\geq 0}$.}
    \Ensure{A feedback vertex set $F'$ of $T'$ of weight at most $\frac 79 w(V')$.}
    \State Choose $z_1$ as a vertex of minimum in-degree.
    \State Set $U_1:=\{z_1\}$, 
    \State Set $U_2:=N(z_1)$,  $W:=V'\setminus (U_1\cup U_2)$, $k:=1$.
    \While{$W\neq \emptyset$}
    \State Set $U_{2k+1}:=N(U_{2k})\cap W$, $W:=W\setminus U_{2k}$.
    \State Set $U':=N(U_{2k+1})\cap W$, $W:=W\setminus U'$.
    \State Choose $z_{2k+1}\in U_{2k+1}$ such that $w(U'\cap N(z_{2k+1}))\ge w(U')/2$.
    \State Set $U_{2k+2}:=U'\cap N(z_{2k+1})$; $S_{2k+2}:=U'\setminus N(z_{2k+1})$.
    \State Set $k:=k+1$.
    \EndWhile
    \State Set $L_0:=\cup_{j=1}^{k} U_{2j}$,
    $L_1:=\cup_{j=0}^{k-1} U_{2j+1}$, and $S:=\cup_{j=1}^{k} S_{2j}$.
    \If{$w(L_0)\ge w(L_1)$}
    \State Run the  \textsc{Cai-Deng-Zang} algorithm for every
    $U_{2j}$ to obtain a FVS $F_{2j}$ of $U_{2j}$.
    \State Set $F':=(\cup_{j=1}^{k} F_{2j})\cup S\cup L_1$.
     \Else
  \State Run the  \textsc{Cai-Deng-Zang} algorithm for every
    $U_{2j+1}$ to obtain a FVS $F_{2j+1}$ of $U_{2j+1}$.
     \State Set $F':=(\cup_{j=0}^{k-1} F_{2j+1})\cup S\cup L_0$.
\EndIf\\
\Return $F'$.
 \end{algorithmic}
  \caption{{\sc Layers}}
\label{alg:layers}
\end{algorithm}
The algorithm (Algorithm~\ref{alg:layers}) first
partitions the vertex set $V'$ into $S\cup \bigcup_{j=1}^{2k} U_j$ for
some $2k\le n$.
We now describe how the layers are constructed in Steps 1-11.
We start by setting
$U_1=\{z_1\}$ for a vertex $z_1$ of minimum in-degree. We let $U_2=N(z_1)$
be the set of in-neighbours of~$z_1$. The set $W$ will denote the set of
vertices not yet included in some $U_k$ or in $S$; at this point,
$W=V'\setminus (U_1\cup U_2)$.

While $W$ is not empty, we construct an odd
layer $U_{2k+1}$, an even layer $U_{2k}$, and $S_{2k+1}$ as follows. Set $U_{2k+1}$ will be simply the set of in-neighbours of
$U_{2k}$ inside $W$; we remove~$U_{2k+1}$ from~$W$. 
In the remaining set $W:= W\setminus U_{2k+1}$, let $U'$ be the set of in-neighbours of
$U_{2k+1}$. We partition~$U'$ into $U_{2k+2}$ and $S_{2k+2}$, and
remove $U'$ from $W$. To obtain this partitioning, we
 pick a vertex $z_{2k+1}\in U_{2k+1}$ such that $w(N(z_{2k+1})\cap U')\ge
w(U')/2$. The existence of such a vertex~$z_{2k+1}$ is non-trivial, and will be
proved in Lemma~\ref{lem:Ui}(c). We set $U_{2k+2}=N(z_{2k+1})\cap U'$, and
$S_{2k+2}=U'\setminus U_{2k+2}$; the set $S_{2k+2}$ will be part of $S$.

The layering procedure finishes once $W=\emptyset$. At this point, we denote by 
$L_0=\bigcup_{j=1}^k U_{2j}$ the set of all even and by
$L_1=\bigcup_{j=0}^{k-1} U_{2j+1}$ the set of all odd layers, and by
$S=\bigcup_{j=1}^k S_{2j}$ the set of vertices removed during the
procedure. Thus,
$V'=S\cup L_0\cup L_1$.
Given the layering, the algorithm constructs a FVS
in Steps 12-18 as follows. If $w(L_0)\ge
w(L_1)$, then we use the \textsc{Cai-Deng-Zang} algorithm  to find an optimal
FVS $F_{2j}$ in all even layers $U_{2j}$. We set the entire FVS as
 $F':=(\cup_{j=1}^k F_{2j})\cup S\cup L_1$.
Otherwise, we use the  \textsc{Cai-Deng-Zang} algorithm in all odd
layers to find optimal FVS's $F_{2j+1}$, and set $F':=(\cup_{j=0}^{k-1} F_{2j+1})\cup S\cup L_0$.

\subsection{Proof of correctness}
The following lemma summarizes the essential properties of the
layering obtained.
\begin{lemma}\label{lem:Ui}
The sets $S$ and $U_i$ returned by Algorithm~\textsc{Layers} satisfy the following properties.
\begin{enumerate}[(a)]
 \item If $i>j+1$, then $u\to v$ for every $u\in U_j$ and $v\in U_i$.
\item Every subtournament $T'[U_i]$ is ${\cal T}_5$-free.
\item  There always exists a vertex $z_{2i+1}\in U_{2i+1}$ as required in line 7 of the algorithm.
\item $w(S)\le w(L_0)$.
\end{enumerate}
\end{lemma}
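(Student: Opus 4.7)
My plan is to establish the four parts in order, with part (a) serving as the key structural tool and parts (b) and (c) both applying Theorem~\ref{thm:layers} inside carefully chosen subtournaments in which the algorithm's layers coincide with $V_3$ and $V_4$ of an appropriate root vertex. For part (a), I proceed by contradiction: suppose $u \in U_j$, $v \in U_i$ with $i > j+1$, and $v \to u$. Since $v$ ends up in $U_i$ with $i \geq j+2$, it is still in $W$ throughout the iteration in which $U_{j+1}$ is constructed. If $j$ is even, the step $U_{j+1} := N(U_j) \cap W$ would have placed $v$ in $U_{j+1}$. If $j$ is odd, the intermediate set $U' := N(U_j) \cap W$ in that iteration would have contained $v$, sending $v$ into $U_{j+1}$ or $S_{j+1}$ (or, in the base case $j = 1$, directly into $U_2 = N(z_1)$). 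In every case $v$ would have been removed from $W$ before reaching $U_i$, a contradiction.

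For part (b), the case $U_1 = \{z_1\}$ is trivial, and for $U_2 = V_2(z_1)$ in $T'$ I invoke Theorem~\ref{thm:layers}(c), which applies since $z_1$ has minimum in-degree. For each $k \geq 1$, I consider the subtournament $T_k := T'[\{z_{2k-1}\} \cup U_{2k} \cup U_{2k+1} \cup U_{2k+2}]$. Using part (a) together with the definitions of $U_{2k}$, $U_{2k+1}$, and $U_{2k+2}$, I verify that inside $T_k$, $V_2(z_{2k-1}) = U_{2k}$, $V_3(z_{2k-1}) = U_{2k+1}$, and $V_4(z_{2k-1}) = U_{2k+2}$. Theorem~\ref{thm:layers}(a),(b) applied to $T_k$ then yields $\mathcal{T}_5$-freeness of $U_{2k+1}$ and $U_{2k+2}$.

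For part (c), I apply an analogous argument inside the extended subtournament $T_k'' := T'[\{z_{2k-1}\} \cup U_{2k} \cup U_{2k+1} \cup U']$, where $U'$ is the set built in line 6 of iteration $k$. The key step is checking that $V_4(z_{2k-1}) = U'$ in $T_k''$; this uses part (a) together with the sub-claim that no $v \in U'$ has an arc into $U_{2k}$, since otherwise $v \in N(U_{2k}) \cap W = U_{2k+1}$, contradicting $v \in U'$. Theorem~\ref{thm:layers}(b) then says $U'$ is $2$-in-dominated by $V_3(z_{2k-1}) = U_{2k+1}$, so there exist at most two vertices $z', z'' \in U_{2k+1}$ with $U' \subseteq N(z') \cup N(z'')$; hence $w(U') \leq w(U' \cap N(z')) + w(U' \cap N(z''))$ and pigeonhole selects a vertex in $\{z', z''\}$ meeting the requirement of line 7, which serves as $z_{2k+1}$.

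Part (d) is then immediate from the defining property of $z_{2k+1}$: $w(U_{2k+2}) = w(U' \cap N(z_{2k+1})) \geq w(U')/2$, and since $U' = U_{2k+2} \cup S_{2k+2}$ is a disjoint union, this gives $w(S_{2k+2}) \leq w(U_{2k+2})$; summing over $k$ yields $w(S) \leq w(L_0)$. The main obstacle throughout is establishing that $V_3$ and $V_4$ in the subtournaments $T_k$ and $T_k''$ coincide exactly with the algorithm's layers, which is what allows us to import Theorem~\ref{thm:layers}, and which crucially relies on the ``no long back-arcs'' property from part (a) to keep stray vertices out of these BFS-like sets.
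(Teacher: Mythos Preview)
Your proof is correct and follows essentially the same route as the paper: part~(a) is by construction, parts~(b) and~(c) both apply Theorem~\ref{thm:layers} inside the subtournament on $\{z_{2k-1}\}\cup U_{2k}\cup U_{2k+1}\cup\cdots$ to identify the algorithm's layers with $V_2,V_3,V_4$ of $z_{2k-1}$, and part~(d) follows from the defining inequality for $z_{2k+1}$. Your version is in fact slightly more careful than the paper's in one respect: for part~(c) you work in $T_k''$ with the full set $U'$ from line~6 rather than with $U_{2k+2}\subseteq U'$, which is what is actually needed to justify the choice of $z_{2k+1}$ before $U_{2k+2}$ has been defined.
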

\begin{proof}
Part (a) is immediate, since if $v\in U_j$, then $N(v)\subseteq
\cup_{\ell=0}^{j+1} (U_\ell\cup S_\ell)$ (let us use the convention
$S_{\ell}=\emptyset$ for all odd values of $\ell$). 

We prove parts (b) and (c) simultaneously. Part (b) is a direct consequence of Theorem~\ref{thm:layers} for
layers $1\le i\le 4$, as $z_1$ was chosen as a minimum in-degree
vertex. The existence of vertex $z_3\in U_3$ follows by
Theorem~\ref{thm:layers}(c): at this point, $U_3=V_3(z_1)$,
$U'=V_4(z_1)$, and thus~$U'$ is 2-in-dominated by $U_3$. This means that
there exist $z,z'\in U_3$ such that $N(z)\cup N(z')\supseteq U'$.
Without loss of generality, we may assume $w(U'\cap N(z))\ge w(U'\cap N(z'))$. Then
$z_3=z$ gives an appropriate choice.

Let us apply  Theorem~\ref{thm:layers} in the tournament $T''$ that is
the restriction of $T'$ to the ground set $\{z_3\}\cup U_4\cup U_5\cup
U_6$. In $T''$ we have $V_3(z_3)=U_5$ and $V_4(z_3)=U_6$, and therefore~$U_5$ and~$U_6$ are both ${\cal T}_5$-free. Further, $U_6$ is 2-in-dominated
by $U_5$ and therefore we can choose an appropriate $z_5\in U_5$ as
above. The same argument works for all values of $i\ge 3$: consider
the restriction of $T'$ to $\{z_{2i-1}\}\cup U_{2i}\cup U_{2i+1}\cup
U_{2i+2}$, and apply  Theorem~\ref{thm:layers}. We obtain that~$U_{2i+1}$ and $U_{2i+2}$ are ${\cal T}_5$-free as well as the choice
of $z_{2i+1}\in U_{2i+1}$.

Finally, part (d) is straightforward, since $w(U_{2i+2})\ge w(S_{2i+2})$
by the choice of $z_{2i+2}$.
\end{proof}

We are ready to prove the correctness and approximation ratio of the algorithm.
\begin{proof}[Proof of Theorem~\ref{thm:T-7-free}]
By Lemma~\ref{lem:Ui}(b), the  \textsc{Cai-Deng-Zang} algorithm can be
applied in all layers $U_i$ and finds an optimal FVS $F_i$ in
polynomial time.

First, let us show that the set $F'$ returned by Algorithm~\textsc{Layers}
is indeed a FVS of $T'$. For a contradiction, assume $V'\setminus F'$ contains a
directed triangle $uvs$. 

Let us assume $w(L_0)\ge w(L_1)$; the other case follows similarly.
In this case, $V'\setminus F'\subseteq L_0$.
The three vertices $u,v$ and $s$ cannot fall into the same layer
$U_{2i}$, as in every such layer we removed a FVS $F_{2i}$. Hence 
they must fall into at least two different $U_{2i}$'s. By
Lemma~\ref{lem:Ui}(b), if vertices fall into different even layers,
then all arcs from the lower layers point towards the higher layers,
excluding the possibility of such a triangle.

The proof is complete by showing $w(F')\le \frac 79w(V')$, or equivalently, 
$w(V'\setminus F')\ge \frac 29w(V')$.

\smallskip

\noindent{\bf Case I:  $w(L_0)\ge w(L_1)$}. In this case,
$w(V'\setminus F')=\cup_{j=1}^k (U_{2j}\setminus F_{2j})$. By
Proposition~\ref{cl:chinese-bound}, $w(F_{2j})\le w(U_{2j})/3$ for all
layers, and thus $w(V'\setminus F')\ge \frac 23 w(L_0)$.
Using Lemma~\ref{lem:Ui}(c), $w(L_0)\ge \max\{ w(L_1),w(S)\}$, and therefore
$w(L_0)\ge w(V')/3$. Thus $w(V'\setminus F')\ge \frac 29 w(V')$ follows.

\smallskip

\noindent{\bf Case II:  $w(L_0)< w(L_1)$}. Using the same argument as
in the previous case, we obtain $w(V'\setminus F')\ge \frac 23 w(L_1)$.
Again using Lemma~\ref{lem:Ui}(c), $w(L_1)>w(L_0)\ge w(S)$, and therefore
$w(L_1)\ge w(V')/3$, implying $w(V'\setminus F')\ge \frac 29 w(V')$.
\end{proof}

\subsection{Proof of Theorem~\ref{thm:layers}}\label{sec:layer-proof}
Let us first verify part (c): 
\begin{lemma}\label{lem:V_2}
Let $z$ be a minimum in-degree
vertex in a ${\cal T}_7$-free tournament. Then $V_2(z)$ is ${\cal T}_5$-free.
\end{lemma}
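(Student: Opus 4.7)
The plan is to proceed by contradiction: assume $V_2(z) = N(z)$ contains a 5-vertex subtournament $Q$ belonging to $\mathcal{T}_5$, and aim to produce a 7-vertex subtournament of $T'$ lying in $\mathcal{T}_7$, contradicting the $\mathcal{T}_7$-freeness of $T'$. Concretely, I search for a single vertex $v \in V' \setminus (Q \cup \{z\})$ such that $T'[Q \cup \{z, v\}] \in \mathcal{T}_7$.

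First I observe that $T'[Q \cup \{z\}]$ already has no transitive subtournament on 5 vertices: such a subtournament either equals $Q$ (impossible since $Q \in \mathcal{T}_5$), or contains $z$ together with four vertices of $Q$, in which case the fact that every vertex of $Q$ sends its arc to $z$ forces $z$ to be the sink and those four vertices of $Q$ to form a transitive 4-subtournament, again contradicting $Q \in \mathcal{T}_5$. The same 4-subtournament argument rules out transitive 5-subsets of any extension $T'[Q \cup \{z, v\}]$ that contain $v$ but not $z$. Hence the only potential obstruction to $T'[Q \cup \{z, v\}] \in \mathcal{T}_7$ is a transitive 5-subset of the form $\{v, z\} \cup R$ with $R \subseteq Q$ and $|R| = 3$. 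Restricting to candidates $v \in V' \setminus (N(z) \cup \{z\})$ (so $z \to v$), a direct check of the in-degrees inside such a 5-set shows that $\{v, z\} \cup R$ is transitive exactly when $R$ is a transitive triangle of $Q$ and $R \subseteq N(v)$. So the task reduces to exhibiting such a $v$ whose in-neighbourhood $N(v) \cap Q$ contains no transitive triangle of $Q$. Since every 4-vertex tournament contains a transitive triangle, $|N(v) \cap Q| \geq 4$ automatically makes $v$ bad, while $|N(v) \cap Q| \leq 2$ trivially makes $v$ good; the borderline case $|N(v) \cap Q| = 3$ makes $v$ good if and only if those three in-neighbours form a 3-cycle.

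The existence of a good $v$ is the crux, and uses that $z$ has minimum in-degree $d := |N(z)|$. For each $q \in Q$ minimality gives $|N(q)| \geq d$; together with $z \notin N(q)$, $|N(q) \cap Q| \leq 3$ (if not, then $q$ together with the transitive triangle guaranteed inside its four in-neighbours in $Q$ would form a transitive 4-subtournament of $Q$, contradicting $Q \in \mathcal{T}_5$), and $|N(q) \cap (N(z) \setminus Q)| \leq d - 5$, this yields $|N(q) \cap (V' \setminus (N(z) \cup \{z\}))| \geq 5 - |N(q) \cap Q|$. Summing over $q \in Q$ produces at least $15$ arcs from $V' \setminus (N(z) \cup \{z\})$ into $Q$. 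If every $v$ in this set were bad, each would send at most two arcs into $Q$, forcing $|V' \setminus (N(z) \cup \{z\})| \geq 8$; the remaining step leverages the specific structure of the three tournaments comprising $\mathcal{T}_5$---each has an explicitly enumerable collection of transitive triangles, with minimum hitting set of size $2$---to push the count and produce a good $v$. I expect the main difficulty to be this final structural step, in particular handling the case $|N(v) \cap Q| = 3$ where the three in-neighbours form a 3-cycle (so $v$ is already good), which requires a careful case analysis across the three members of $\mathcal{T}_5$.
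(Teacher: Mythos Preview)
Your reduction is correct and matches the paper: you want a vertex $v$ with $z\to v$ such that $Q\cup\{v,z\}\in\mathcal T_7$, and you correctly show this holds precisely when $N(v)\cap Q$ contains no transitive triple of $Q$ (in particular whenever $|N(v)\cap Q|\le 2$). Your use of the minimum in-degree hypothesis to produce many arcs from $V'\setminus(N(z)\cup\{z\})$ into $Q$ is also sound.

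The gap is in the final step. Your counting only yields that the total number of such arcs is at least $15$; from ``all $v$ bad'' you deduce there are at least $8$ candidate vertices, but the tournament may be arbitrarily large, so this is no contradiction. The vague appeal to ``minimum hitting set of size $2$'' for the transitive triangles of $Q$ does not close the gap either: knowing that every bad $v$ has $a\in N(v)$ or $b\in N(v)$ for some fixed $a,b\in Q$ gives no bound on how many bad $v$'s there are. Crucially, your argument invokes $\mathcal T_7$-freeness only once, at the very end; but a single application cannot control the global structure outside $Q\cup\{z\}$.

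The paper's proof uses $\mathcal T_7$-freeness a second time, before ever looking at $Q$. From minimum in-degree one sees (as you do) that every $u\in V_2(z)$ has an in-neighbour in $V_3(z)$. Now take a \emph{minimal} set $H\subseteq V_3(z)$ with the property that every $u\in V_2(z)$ receives an arc from some vertex of $H$. If $|H|\ge 3$, minimality yields $v_1,v_2,v_3\in H$ and private witnesses $u_1,u_2,u_3\in V_2(z)$ with $v_i\to u_i$ and $u_i\to v_j$ for $i\neq j$; together with $z$ these seven vertices form the configuration $S_7\in\mathcal T_7$, a contradiction. Hence $|H|\le 2$. Now pigeonhole on $Q$: some $v\in H$ has $v\to s$ for at least $\lceil 5/2\rceil=3$ vertices $s\in Q$, i.e.\ $|N(v)\cap Q|\le 2$, so $v$ is good in your sense and $Q\cup\{v,z\}\in\mathcal T_7$. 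This replaces your unfinished counting/case-analysis with a two-line pigeonhole, at the cost of one extra (and clean) use of $\mathcal T_7$-freeness.
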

\begin{proof}
 We first claim that for every $u\in V_2(z)$ there must exist a $v\in V_3(z)$ with $v\to u$.
 Indeed, assume that for some $u$ there exists no such $v$. Then
 $N(u)\subsetneq V_2(z)=N(z)$ must hold. This is a contradiction to
 the choice of $z$ with $|N(z)|$ minimum.

  Consider a subset $H\subseteq V_3(z)$ containing at least one vertex $v$ with $v\to u$ for every $u\in V_2$; choose $H$ minimal for containment. 
  If $|H|\ge 3$, then there must be three vertices  $v_1,v_2,v_3\in
  H$, and three vertices $u_1,u_2,u_3\in V_2(z)$ such that $v_i\to
  u_i$ for $i=1,2,3$, while $u_i\to v_j$ if $i\neq j$. Then~$z$ and
  these vertices together form an $S_7\in {\cal T}_7$ subtournament as in
  Fig.~\ref{fig:t5-t7}(b), a contradiction.

  Hence $|H|\le 2$.
  For a contradiction, assume $X\subseteq V_2(z)$ forms a ${\cal T}_5$-graph ($|X|=5$).
  There exists a $v\in H$ with $|\{s\in X: v\to s\}|\ge 3$.
  We claim that $X\cup\{v,z\}\in {\cal T}_7$.
  Indeed, assume it contains a transitive tournament $Y$ on 5 vertices.
  Since $X\in {\cal T}_5$, $|X\cap Y|\le 3$; hence $v,z\in Y$ and  $|X\cap Y|=3$.
  There must be a vertex $t\in X\cap Y$ with $v\to t$, and thus~$vtz$ is a directed triangle, a contradiction.
\end{proof}

For parts (a) and (b) of Theorem~\ref{thm:layers}, we show that the 2-in-domination
claim implies ${\cal T}_7$-freeness:

 \begin{lemma}
\label{cl:2-parents-suffice}
Let $z$ be an arbitrary vertex in a ${\cal T}_7$-free tournament.
  For $i\ge 3$, if $V_i(z)$ is 2-in-dominated by $V_{i-1}(z)$, then
  $V_i(z)$ is ${\cal T}_5$-free.
\end{lemma}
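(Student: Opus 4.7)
The plan is to mirror the structure of Lemma~\ref{lem:V_2}: assume for contradiction that some $X \subseteq V_i(z)$ with $|X|=5$ lies in $\mathcal{T}_5$, and exhibit a $7$-vertex subset of the tournament whose induced subtournament lies in $\mathcal{T}_7$, contradicting $\mathcal{T}_7$-freeness. I would begin by applying pigeonhole to the in-dominating pair: let $v_1, v_2 \in V_{i-1}(z)$ witness the 2-in-domination of $V_i(z)$, so every $u \in X$ satisfies $u \to v_1$ or $u \to v_2$; since $|X|=5$, at least three vertices of $X$ point to the same $v_j$, so we may assume $X_1 := \{u \in X : u \to v_1\}$ has $|X_1| \ge 3$.

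Next I would produce a vertex $w$ that, together with $v_1$, forces a directed triangle through every vertex of $X_1$. Since $v_1 \in V_{i-1}(z)$ with $i \ge 3$, pick $w \in V_{i-2}(z)$ to be the next vertex on a shortest path from $v_1$ to $z$ (so $v_1 \to w$; when $i = 3$ this is simply $w = z$). The layer definition forces $w \to u$ for every $u \in V_i(z)$: otherwise $u \to w$ would yield a path $u \to w \to \cdots \to z$ of length $1 + (i-3) = i - 2$, contradicting $u \in V_i(z)$. Consequently, for each $u \in X_1$ the triple $\{u, v_1, w\}$ induces the cyclic triangle $u \to v_1 \to w \to u$.

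Now set $Y := X \cup \{v_1, w\}$, which has exactly seven distinct vertices because $v_1 \in V_{i-1}(z)$ and $w \in V_{i-2}(z)$ are disjoint from $V_i(z) \supseteq X$. Suppose a subset $Z \subseteq Y$ with $|Z|=5$ induced a transitive subtournament. Since $X \in \mathcal{T}_5$ contains no transitive $4$-subtournament, $|Z \cap X| \le 3$; together with $|Y \setminus X| = 2$ this forces $v_1, w \in Z$ and $|Z \cap X| = 3$. But no $u \in X_1$ may lie in $Z$, since then the cycle $u \to v_1 \to w \to u$ would violate transitivity. Hence $Z \cap X \subseteq X \setminus X_1$, whose size is at most $5 - |X_1| \le 2$, contradicting $|Z \cap X| = 3$. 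Therefore the subtournament on $Y$ lies in $\mathcal{T}_7$, contradicting $\mathcal{T}_7$-freeness.

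The main (and essentially only) obstacle is identifying the right $7$-vertex witness $Y$: once $v_1$ is chosen by pigeonhole from the in-dominating pair and $w$ is chosen as its shortest-path successor toward $z$, the layer structure automatically furnishes a cyclic triangle on $\{u, v_1, w\}$ for every $u \in X_1$, and the remainder is a short counting argument.
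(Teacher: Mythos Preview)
Your proof is correct and follows essentially the same approach as the paper: pick by pigeonhole a vertex $v\in V_{i-1}(z)$ in-dominating at least three vertices of the hypothetical $\mathcal{T}_5$-set $X$, pick $s\in V_{i-2}(z)$ with $v\to s$, and argue that $X\cup\{v,s\}\in\mathcal{T}_7$ via the same counting. The paper's proof simply refers back to the analogous argument in Lemma~\ref{lem:V_2} without spelling out the details you provide (in particular the layer argument for why $w\to u$ for all $u\in V_i(z)$), but the content is identical.
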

\begin{proof}
  Consider a ${\cal T}_5$-subtournament $X$ in $V_i(z)$. By 2-in-domination, there must be
  a $v\in V_{i-1}(z)$ such that $|N(v)\cap X|\ge 3$.
 Let $s\in V_{i-2}(z)$ be such that $v\to s$.
  We obtain a contradiction as in the previous proof showing that $X\cup \{v,s\}\in {\cal T}_7$.
\end{proof}

The proof of Theorem~\ref{thm:layers} is complete by the following two lemmata, that show that both~$V_3(z)$ and $V_4(z)$ are 2-in-dominated by the previous layer.
\begin{lemma}
\label{cl:2-parents}
For an arbitrary vertex $z$ in a ${\cal T}_7$-free tournament $T'$, the set $V_3(z)$ is 2-in-dominated by $V_2(z)$.
\end{lemma}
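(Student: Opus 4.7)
The plan is to proceed by contradiction using an inclusion-minimal in-dominating subset of $V_2(z)$. Suppose $V_3(z)$ is not 2-in-dominated by $V_2(z)$, and let $H\subseteq V_2(z)$ be an inclusion-minimal subset that in-dominates $V_3(z)$. Such an $H$ exists because $V_2(z)$ itself in-dominates $V_3(z)$ by the definition of the layers, and by hypothesis $|H|\ge 3$. Pick any three distinct $v_1,v_2,v_3\in H$.

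The key step is to use minimality to extract three ``private witnesses'' $w_1,w_2,w_3\in V_3(z)$ with the correct cross-arcs. For each $i$, the set $H\setminus\{v_i\}$ fails to in-dominate $V_3(z)$, so there exists $w_i\in V_3(z)$ having no out-neighbor in $H\setminus\{v_i\}$; since $H$ does in-dominate $V_3(z)$, we must also have $w_i\to v_i$. In a tournament, ``$w_i$ has no out-neighbor in $H\setminus\{v_i\}$'' is equivalent to $v_j\to w_i$ for every $j\ne i$. A brief check then shows $w_1,w_2,w_3$ are pairwise distinct: if $w_i=w_j$ for $i\ne j$, then we would have both $w_i\to v_i$ and $v_i\to w_i$.

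The final step is to verify that the seven vertices $\{z,v_1,v_2,v_3,w_1,w_2,w_3\}$ induce a subtournament matching the template $S_7$ of Figure~\ref{fig:t5-t7}(b), hence lying in $\mathcal{T}_7$ by Proposition~\ref{cl:m3-plus-one} and contradicting $\mathcal{T}_7$-freeness of $T'$. The arcs present are $v_i\to z$ (since $v_i\in V_2(z)=N(z)$), $z\to w_i$ (since $w_i\in V_3(z)$ precludes the arc $w_i\to z$, as that would place $w_i$ in $V_2(z)$), $w_i\to v_i$, and $v_j\to w_i$ for $j\ne i$; the arcs among the $v_i$'s and among the $w_i$'s may be oriented arbitrarily, exactly as in the figure. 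The main technical obstacle is being careful with the minimality argument so that the off-diagonal arcs $v_j\to w_i$ come out in the required direction, but once one notes that non-domination translates into a reversed arc in a tournament, the structure of $S_7$ falls out immediately.
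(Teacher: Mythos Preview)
Your proof is correct and follows exactly the same approach as the paper: take an inclusion-minimal in-dominating set $H\subseteq V_2(z)$, and if $|H|\ge 3$ use minimality to extract three private witnesses in $V_3(z)$ yielding a copy of $S_7$ on $\{z,v_1,v_2,v_3,w_1,w_2,w_3\}$. The paper's own proof is a one-line sketch that points back to the identical argument carried out in Lemma~\ref{lem:V_2}; you have simply filled in the details (distinctness of the $w_i$, direction of the $z\leftrightarrow w_i$ arcs, matching to the $S_7$ template) that the paper leaves implicit.
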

\begin{proof}
  Let $H\subseteq V_2(z)$ be a minimal set for containment that
  in-dominates $V_3(z)$.   We show that $|H|\le 2$.
  Indeed, if $|H|\ge 3$, then again there must be a tournament $S_7$ as in Fig.~\ref{fig:t5-t7}(b), formed by~$z$, three vertices in $V_2(z)$ and three in $V_3(z)$.
\end{proof}

In the sequel, let  $\{a,b\}\subseteq V_2(z)$ be a set that
2-in-dominates $V_3(z)$.

\begin{lemma}
\label{lem:v4}
 For an arbitrary vertex $z$ in a ${\cal T}_7$-free tournament $T'$, the set $V_4(z)$ is 2-in-dominated by $V_3(z)$.
\end{lemma}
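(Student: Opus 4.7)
My plan is to mimic the argument used in the proof of Lemma~\ref{cl:2-parents}. Let $H \subseteq V_3(z)$ be a minimal-for-containment subset that in-dominates $V_4(z)$; it suffices to show $|H|\le 2$. For contradiction, assume $|H|\ge 3$. By minimality of $H$, for every $v\in H$ there is a private witness $u_v\in V_4(z)$ satisfying $u_v\to v$ and $v'\to u_v$ for all $v'\in H\setminus\{v\}$. Picking three vertices $v_1,v_2,v_3\in H$ with private witnesses $u_1,u_2,u_3$ yields a six-vertex ``core'' with arcs $u_i\to v_i$ and $v_j\to u_i$ for $j\ne i$ --- exactly the ``middle-to-bottom'' arcs of an $S_7$-template, with the $v_i$ playing the role of middle and the $u_i$ the role of bottom.

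To contradict $\mathcal T_7$-freeness, my plan is to exhibit a seventh ``apex'' vertex $w$ with $v_i\to w$ and $w\to u_i$ for all $i\in\{1,2,3\}$; then the induced subtournament on $\{w,v_1,v_2,v_3,u_1,u_2,u_3\}$ carries the $S_7$ structure of Fig.~\ref{fig:t5-t7}(b) and therefore lies in $\mathcal T_7$. The natural candidate for $w$ lies in the pair $\{a,b\}\subseteq V_2(z)$ supplied by Lemma~\ref{cl:2-parents}, which $2$-in-dominates $V_3(z)$: every $v_i$ points to $a$ or to $b$. Moreover, for any $w\in V_2(z)$ the arc $w\to u_i$ holds automatically: if instead $u_i\to w$ then $u_i\to w\to z$ would give $u_i\in V_3(z)$, contradicting $u_i\in V_4(z)$. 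Thus the only nontrivial requirement is that the chosen triple $v_1,v_2,v_3$ shares a common parent in $\{a,b\}$.

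If at least three vertices of $H$ point to the same element of $\{a,b\}$, I select them as $v_1,v_2,v_3$, take $w$ to be that common parent, and obtain the desired $S_7\in\mathcal T_7$ --- a contradiction. By pigeonhole this case covers every instance with $|H|\ge 5$, and also every instance in which the partition $H=H_a\cup H_b$ is sufficiently unbalanced (where $H_a$, $H_b$ denote the out-neighbours of $a$ and $b$ in $H$). The main obstacle is the residual ``balanced split'' case with $|H|\in\{3,4\}$ in which no three members of $H$ share a parent in $\{a,b\}$, for example $|H|=3$ with $v_1,v_2\to a$ only and $v_3\to b$ only, or $|H|=4$ with $|H_a|=|H_b|=2$ and $H_a\cap H_b=\emptyset$.

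To dispose of these exceptional configurations, my plan is to construct an alternative $7$-vertex witness that incorporates both $a$ and $b$ (dropping one of $v_3$ or $u_3$) and exploit the arcs $v_i\to a$, $b\to v_i$, $a\to u_i$, $b\to u_i$ on top of the core. Verifying that this induced subtournament lacks a transitive $5$-subtournament --- and therefore belongs to $\mathcal T_7$ --- amounts to a case analysis according to the orientation of the $a$-versus-$b$ arc, the tournament on $\{v_1,v_2,v_3\}$, and the tournament on $\{u_1,u_2,u_3\}$, using the privacy arcs $v_j\to u_i$ ($j\ne i$) and $u_i\to v_i$ to break any putative transitive chain. Closing this balanced-split case analysis is the part I expect to require the most care.
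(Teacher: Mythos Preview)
Your setup and your treatment of the ``easy'' case (three members of $H$ sharing a parent in $\{a,b\}$) are correct and coincide with the paper's opening moves. The difficulty is entirely in the balanced-split case, and here your outlined plan does not go through.

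Concretely, take the configuration you single out: $v_1,v_2\to a$, $a\to v_3$, $v_3\to b$, $b\to v_1,v_2$ (your notation, so $v_i\in V_3(z)$ and $u_i\in V_4(z)$). If you drop $u_3$ and test the $7$-set $\{a,b,v_1,v_2,v_3,u_1,u_2\}$, then whenever $a\to b$ the subset $\{a,v_3,b,u_1,u_2\}$ is transitive (in that order), so the $7$-set is \emph{not} in $\mathcal T_7$. If instead you drop $v_3$ and test $\{a,b,v_1,v_2,u_1,u_2,u_3\}$, then $\{a,b,u_1,u_2,u_3\}$ is transitive whenever the three $u_i$'s do not form a directed triangle; and if $b\to a$, also $\{b,v_1,v_2,a,u_3\}$ is transitive. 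Symmetric choices (dropping $u_1$, $u_2$, $v_1$, or $v_2$) fail for the same reasons. In short, no single $7$-vertex witness built from $\{a,b\}$ together with five of the six core vertices lies in $\mathcal T_7$ across all orientations of $a\!-\!b$, the $v_i$'s, and the $u_i$'s, so the case analysis you anticipate cannot close by exhibiting one such witness.

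The paper's proof differs in two respects that are essential. First, it brings the vertex $z$ back into play: the five vertices $\{z,a,b,v_2,v_3\}$ (your notation) already form a member of $\mathcal T_5$, which severely constrains any transitive $5$-set in a $7$-set containing them. Second, the paper does not look for a single $\mathcal T_7$ witness at all; instead it considers the two $7$-sets $Q_i=\{z,a,b,v_1,v_2,v_3,u_i\}$ for $i=1,2$, uses $\mathcal T_7$-freeness to extract a transitive $5$-subtournament $T_i$ from each, and shows that the arc orientations forced by the existence of one $T_i$ are incompatible with the existence of the other. Both the use of $z$ and the simultaneous exploitation of two $7$-sets are ideas missing from your outline and appear necessary to finish the argument.
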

\begin{proof}
  For the sake of contradiction, assume that the
  minimal set in $V_3(z)$ 2-in-dominating $V_4(z)$ has size at least 3. Then there must exists vertices $u_1,u_2,u_3\in V_3(z)$ and
  $v_1,v_2,v_3\in V_4(z)$ such that $v_i\to
  u_i$ for $i=1,2,3$, while $u_i\to v_j$ if $i\neq j$. 

  If all $u_1,u_2,u_3\in N(a)$, then
  $\{a,u_1,u_2,u_3,v_1,v_2,v_3\}$ forms an $S_5$ tournament, a
  contradiction.  A similar argument applies for $b$.
  We may therefore assume (by possibly renaming the indices) that
  $u_1\to a, u_2\to a, u_3\to b, a\to u_3$, and $b\to u_2$.
  See Figure~\ref{fig:new}.

 \begin{figure}[htb]
  \centering
     \includegraphics[width=0.25\textwidth]{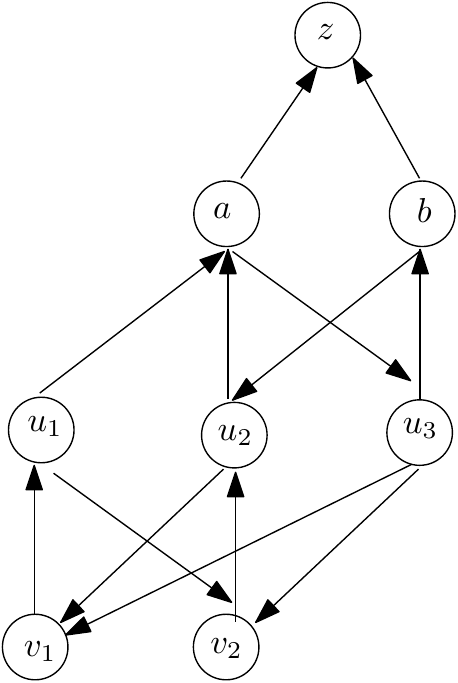}
\caption{Illustration of the proof of Lemma~\ref{lem:v4}. A few directed edges that are not portrayed are: from $z$ to each one of $\{u_1,u_2,u_3,v_1,v_2\}$ and from each of $\{a,b\}$ to each of $\{v_1,v_2\}$.}\label{fig:new}
  \end{figure}

 Since $T'$ is ${\cal T}_7$-free, then every 7 vertex subgraph of
 $\{z,a,b,u_1,u_2,u_3,v_1,v_2,v_3\}$ must contain a transitive
 tournament on $5$ vertices.
 Let $Q=\{z,a,b,u_2,u_3\}$, and  for $i=1,2$, let $Q_i$ denote $Q\cup \{u_1,v_i\}$. Let $T_i$ be a transitive tournament on $5$ nodes in $Q_i$. 
 
 Notice that that $Q$ forms a ${\cal T}_5$. Because of this, for both $i=1$ and $i=2$, $\{u_1,v_i\}\subseteq T_i$.
 Furthermore, $b,u_3$ and $z$ cannot all be in $T_i$ since they form a directed triangle; so 
 $\{a,u_2\}\cap T_i\neq \emptyset$.  A symmetric argument shows that  $\{b,u_3\}\cap T_i\neq\emptyset$ as well.

Now, since either $u_2\to u_1$ or $u_1\to u_2$, either $u_2u_1v_2$ or $u_1u_2v_1$ forms a directed triangle. Thus, $u_2\notin T_i$ for either $i=1$ or $i=2$. 
For the same $i$, $a\in T_i$ because of  $\{a,u_2\}\cap T_i\neq \emptyset$.
 Then $z$ cannot be in $T_i$ because $u_1az$ forms a directed triangle. Hence $T_i=\{a,b,u_1,u_3,v_i\}$, and this implies that
\begin{enumerate}[(i)]
\item $a\to b$ since $a\rightarrow u_3\rightarrow b$, 
\item $u_1\to u_3$ since $u_1\rightarrow a\rightarrow u_3$, and
\item $u_1\to b$  since $u_1\rightarrow a\rightarrow b$, using (i).
\end{enumerate}

As noted above, $\{u_1,v_1\}\subseteq T_1$. By (ii), $v_1u_1u_3$ forms a directed triangle, and by (iii), $v_1u_1b$ forms a triangle. Hence, neither $u_3$ nor $b$ can be contained in $T_1$, contradicting that  $\{b,u_3\}\cap T_1\neq\emptyset$. This completes the proof of Lemma~\ref{lem:v4}.
\end{proof}

\section{Connections to Tournament Colouring}\label{sec:hero}
We explore a connection to the notion of heroes and celebrities in
tournaments studied by  Berger, Choromanski,
Chudnovsky, Fox, Loebl, Scott, Seymour and
Thomass{\'e}~\cite{BergerEtAl2013}. Colouring a tournament means
partitioning its vertex set into transitive subtournaments; the
chromatic number of a tournament is the minimum number of colours needed.
A tournament $H$ is called a {\em hero}, if there exists a constant $c_H$
such that every $H$-free tournament has chromatic number at most $c_H$. Further, $H$ is
called a \emph{celebrity}, if for some constant $c'_H$, every $H$-free
tournament $T$ has a transitive subtournament of size at least $c'_H|V(T)|$.
Clearly, every hero is a celebrity; Berger et al. show that the
converse also holds: every celebrity is a hero.
Their work gives a characterization of all tournaments that are heros
(or equivalently, celebrities).

In this context, our Theorem~\ref{thm:T-7-free} shows that ${\cal
  T}_7$ collectively form a celebrity set. Further, our constant
$c'=2/9$ seems much better than the constants that could be derived
using the techniques of Berger et al.~\cite{BergerEtAl2013}. The set ${\cal T}_7$
includes some heros as well as some non-hero tournaments.
In contrast, the set ${\cal T}_5$ is precisely the set of heros on 5
vertices.

Berger et al.'s~\cite{BergerEtAl2013} characterization rules out the
following possible modification of our algorithm to obtain a 2-approximation for the {\sc Feedback Vertex Set} in tournaments problem. Instead of ${\cal T}_7$, one could use the
single tournament $ST_6$, the unique 6-vertex tournament not
containing a transitive subtournament of order 4~\cite{SanchezFlores1998}.
All copies $ST_6$ can be removed from the input tournament by losing a factor 2 in the approximation ratio only (instead of losing $7/3$ by removing copies of subtournaments from $\mathcal T_7$).
However, according to Berger et al.~\cite[Thm. 1.2]{BergerEtAl2013}, $ST_6$ is
\emph{not} a hero, and hence  there is no hope to prove a version of 
Theorem~\ref{thm:T-7-free} for this setting.

{
\bibliographystyle{abbrv}
\bibliography{fvstbib}

\begin{thebibliography}{10}

\bibitem{banksset}
J.~Banks.
\newblock Sophisticated voting outcomes and agenda control.
\newblock {\em Social Choice and Welfare}, 1(4):295--306, 1985.

\bibitem{BarYehudaRawitz2005}
R.~Bar-Yehuda and D.~Rawitz.
\newblock On the equivalence between the primal-dual schema and the local ratio
  technique.
\newblock {\em SIAM J. Discrete Math.}, 19(3):762--797, 2005.

\bibitem{BergerEtAl2013}
E.~Berger, K.~Choromanski, M.~Chudnovsky, J.~Fox, M.~Loebl, A.~Scott,
  P.~Seymour, and S.~Thomass{\'e}.
\newblock Tournaments and colouring.
\newblock {\em J. Combin. Theory Ser. B}, 103(1):1--20, 2013.

\bibitem{CaiEtAl2002}
M.~Cai, X.~Deng, and W.~Zang.
\newblock A min-max theorem on feedback vertex sets.
\newblock {\em Math. Oper. Res.}, 27(2):361--371, 2002.

\bibitem{CaiEtAl2001}
M.-C. Cai, X.~Deng, and W.~Zang.
\newblock An approximation algorithm for feedback vertex sets in tournaments.
\newblock {\em SIAM J. Comput.}, 30(6):1993--2007 (electronic), 2001.

\bibitem{DinurSafra2005}
I.~Dinur and S.~Safra.
\newblock On the hardness of approximating minimum vertex cover.
\newblock {\em Ann. of Math. (2)}, 162(1):439--485, 2005.

\bibitem{DomEtAl2010}
M.~Dom, J.~Guo, F.~H{\"u}ffner, R.~Niedermeier, and A.~Truss.
\newblock Fixed-parameter tractability results for feedback set problems in
  tournaments.
\newblock {\em J. Discrete Algorithms}, 8(1):76--86, 2010.

\bibitem{EvenEtAl1998}
G.~Even, J.~Naor, B.~Schieber, and M.~Sudan.
\newblock Approximating minimum feedback sets and multicuts in directed graphs.
\newblock {\em Algorithmica}, 20(2):151--174, 1998.

\bibitem{GaspersMnich2013}
S.~Gaspers and M.~Mnich.
\newblock Feedback vertex sets in tournaments.
\newblock {\em J. Graph Theory}, 72(1):72--89, 2013.

\bibitem{Jain2001}
K.~Jain.
\newblock A factor 2 approximation algorithm for the generalized {S}teiner
  network problem.
\newblock {\em Combinatorica}, 21(1):39--60, 2001.

\bibitem{Karp1972}
R.~M. Karp.
\newblock Reducibility among combinatorial problems.
\newblock In {\em Complexity of computer computations ({P}roc. {S}ympos., {IBM}
  {T}homas {J}. {W}atson {R}es. {C}enter, {Y}orktown {H}eights, {N}.{Y}.,
  1972)}, pages 85--103. Plenum, New York, 1972.

\bibitem{KhotRegev2008}
S.~Khot and O.~Regev.
\newblock Vertex cover might be hard to approximate to within {$2-\epsilon$}.
\newblock {\em J. Comput. System Sci.}, 74(3):335--349, 2008.

\bibitem{Krivelevich1995}
M.~Krivelevich.
\newblock On a conjecture of {T}uza about packing and covering of triangles.
\newblock {\em Discrete Mathematics}, 142(1):281--286, 1995.

\bibitem{LauEtAl2011}
L.~C. Lau, R.~Ravi, and M.~Singh.
\newblock {\em Iterative methods in combinatorial optimization}.
\newblock Cambridge Texts in Applied Mathematics. Cambridge University Press,
  New York, 2011.

\bibitem{Moon1971}
J.~W. Moon.
\newblock On maximal transitive subtournaments.
\newblock {\em Proc. Edinburgh Math. Soc. (2)}, 17:345--349, 1970/71.

\bibitem{SanchezFlores1998}
A.~Sanchez-Flores.
\newblock On tournaments free of large transitive subtournaments.
\newblock {\em Graphs and Combinatorics}, 14(2):181--200, 1998.

\bibitem{Sasatte08}
P.~Sasatte.
\newblock Improved approximation algorithm for the feedback set problem in a
  bipartite tournament.
\newblock {\em Oper. Res. Lett.}, 36(5):602--604, 2008.

\bibitem{Seymour1995}
P.~D. Seymour.
\newblock Packing directed circuits fractionally.
\newblock {\em Combinatorica}, 15(2):281--288, 1995.

\bibitem{Speckenmeyer1992}
E.~Speckenmeyer.
\newblock On feedback problems in digraphs.
\newblock In {\em Proc. WG 1989}, volume 411 of {\em Lecture Notes Comput.
  Sci.}, pages 218--231. 1990.

\bibitem{vanZuylen2011}
A.~van Zuylen.
\newblock Linear programming based approximation algorithms for feedback set
  problems in bipartite tournaments.
\newblock {\em Theor. Comput. Sci.}, 412(23):2556--2561, 2011.

\end{thebibliography}
}

\end{document}